\numberwithin{equation}{section}
\renewcommand{\section}{\@startsection {section}{1}{\z@}%
                                   {-3.5ex \@plus -1ex \@minus -.2ex}%
                                   {.5\linespacing}%
                                   {\normalfont\scshape\centering}}
\newtheorem{thm}{Theorem}[section]
\newtheorem{lem}[thm]{Lemma}
\newtheorem{cor}[thm]{Corollary}
\newtheorem{prop}[thm]{Proposition}
\theoremstyle{definition}
\newtheorem{definition}{Definition}[section]
\theoremstyle{remark}
\def\beq#1\eeq{\begin{equation}#1\end{equation}}
 \newcommand{\onto}{\to\mkern-14mu\to}
 \def\rightarrowfill@#1{\m@th\setboxz@h{$#1\relbar$}\ht\z@\z@
   $#1\copy\z@\mkern-6mu\cleaders
   \hbox{$#1\mkern-2mu\box\z@\mkern-2mu$}\hfill
   \mkern-6mu\mathord\rightarrow$}
 \def\leftarrowfill@#1{\m@th\setboxz@h{$#1\relbar$}\ht\z@\z@
   $#1\mathord\leftarrow\mkern-6mu\cleaders
   \hbox{$#1\mkern-2mu\copy\z@\mkern-2mu$}\hfill
   \mkern-6mu\box\z@$}
 \def\B@R#1#2{\raisebox{-.07ex}{$#1#2$}\mkern-6mu}
 \renewcommand{\hbar}{{\mspace{1mu}\mathpalette\B@R{\mathchar'26}h}}
 \DeclareMathSymbol{\onto}{\mathrel}{AMSa}{"10}
 \renewcommand{\hbar}{{\mathchar'26\mkern-9muh}}
\newcommand{\C}{\mathcal{C}}
\newcommand{\co}{\mathbb{C}}
\newcommand{\cs}{\mbox{\upshape C}\ensuremath{{}^*}}
\newcommand{\R}{\mathbb{R}}
\newcommand{\Z}{\mathbb{Z}} 
\newcommand{\into}{\hookrightarrow}
\DeclareMathOperator{\rk}{rk}
\newcommand{\inner}{\mathbin{\raise1.5pt\hbox{$\lrcorner$}}}
\newcommand{\binner}{\mathbin{\raise1.5pt\hbox{$\llcorner$}}}
\DeclareMathOperator{\tr}{tr}
\newcommand{\abs}[1]{\lvert#1\rvert}
\newcommand{\Kahler}{K\"ahler} 
\newcommand{\A}{\mathcal{A}}
\renewcommand{\AA}{\mathbb{A}}
\DeclareMathOperator{\Image}{Im}
\renewcommand{\Im}{\Image}
\newcommand{\Qh}{Q_\hbar}
\newcommand{\Or}{\mathcal O}
\newcommand{\Norm}[1]{\left\|#1\right\|}
\newcommand{\norm}[1]{\lVert#1\rVert}
\DeclareMathOperator{\Mat}{Mat}
\newcommand{\N}{\mathbb N}
\newcommand{\Abs}[1]{\left|#1\right|}
\DeclareMathOperator{\Spec}{Spec}
\newcommand{\Hi}{\mathcal H}
\newcommand{\Li}{\mathcal L}
\newcommand{\starn}{\mathbin{*^{\scriptscriptstyle{n}}}}
\newcommand{\starpn}{\mathbin{*^{\prime\scriptscriptstyle{n}}}}
\newcommand{\su}{\mathfrak{su}}
\DeclareMathOperator{\SU}{SU}
\DeclareMathOperator{\Tr}{Tr}
\newcommand{\nTr}{\mathop{\widetilde\Tr}}
\newcommand{\Gn}{\Gamma^{(n)}_{\scriptscriptstyle Q}}
\newcommand{\Gnp}{\Gamma^{(n)}_{\scriptscriptstyle Q'}}
\newcommand{\eb}{e_{\mathrm{Bott}}}
\newcommand{\Ahat}{\hat{A}}
\DeclareMathOperator{\ch}{ch}
\DeclareMathOperator{\dist}{dist}
\newcommand{\T}{\mathbb T}
\newcommand{\deRham}{de\,Rham}
\newcommand{\thn}{\vartheta_{\scriptscriptstyle{(n-2)}}}
\title{An Obstruction to Quantization of the Sphere}
\author{Eli Hawkins}
\subjclass[2000]{46L65; \emph{secondary} 53D55, 53D50, 81S10}
\begin{document}
\maketitle
\begin{center}
\vspace{-4ex}
\emph{\small Institute for Mathematics, Astrophysics, and Particle Physics}\\
\emph{\small Radboud University Nijmegen, The Netherlands}\\
{\small mrmuon@mac.com}\\
\end{center}

\begin{abstract}
In the standard example of strict deformation quantization of the symplectic sphere $S^2$, the set of allowed values of the quantization parameter $\hbar$ is not connected; indeed, it is almost discrete. Li recently constructed a class of examples (including $S^2$) in which $\hbar$ can take any value in an interval, but these examples are badly behaved. 
Here, I identify a natural additional axiom for strict deformation quantization and prove that it implies that the parameter set for quantizing $S^2$ is never connected.
\end{abstract}

\section{Introduction}
The standard geometric quantization construction for a manifold $M$ with symplectic form $\omega$ uses a line bundle over $M$ with curvature equal to $\omega/\hbar$. Such a line bundle exists if and only if the \deRham\ cohomology class $[\frac{\omega}{2\pi \hbar}]\in H^2(M)$ is integral. Unless the symplectic form is exact, this greatly restricts the allowed values of $\hbar$. At best, there is some maximum $\hbar_0$, and the allowed values of $\hbar$ are $\hbar_0,\hbar_0/2,\hbar_0/3,\dots,0$. At worst, there does not exist any $\hbar$ satisfying this integrality condition; for example, this is the case with the Cartesian product of two symplectic spheres whose symplectic volumes differ by an irrational factor.

Of course, this is just an obstruction to a particular construction. It says nothing about whether $\hbar$ is so restricted for \emph{any} quantization of $M$. A more general existential result was found by Fedosov \cite{fed} in his study of ``asymptotic operator representations'' (AOR's) of formal deformation quantizations. Let $\theta= [\frac\omega{2\pi \hbar}] + \dots \in \hbar^{-1}H^2(M)[[\hbar]]$ be the characteristic class of a formal deformation quantization. Under some assumptions about the trace, he showed that as $\hbar\to0$, $\theta$ must asymptotically satisfy integrality conditions at the allowed values of $\hbar$.
This is an interesting result, but there are many examples of quantization in which Fedosov's assumptions are not true and $\hbar$ violates his integrality conditions. 

One way to construct such quantizations is to consider the universal covering space $\tilde M$. If we can construct a quantization of $\tilde M$ that is covariant under the action of $\pi_1(M)$, then this should descend to a quantization of $M$. This implies that requiring $[\frac\omega{2\pi \hbar}]$ (or $\theta$) to be integral on $M$ is too restrictive. Instead, we should only expect it to give an integral cohomology class on $\tilde M$.

This is a weaker condition, because $H^2(\tilde M)$ is generally smaller than $H^2(M)$. To be precise, for a path-connected manifold, $H_2(\tilde M) = \pi_2(M)$; so, $\frac\omega{2\pi \hbar}$ gives an integral cohomology class on $\tilde M$ if it pairs integrally with $\pi_2(M)$, or equivalently, for any smooth map $\varphi :S^2 \to M$,
\beq
\label{integer}
\int_{S^2}\frac{\varphi^*\omega}{2\pi \hbar} \in \Z .
\eeq
In particular, there is no integrality condition if $\omega$ pairs trivially with $\pi_2(M)$.

There are many examples of quantization where $\hbar$ is unrestricted by this condition. The noncommutative torus \cite{rie4} is a quantization of the symplectic torus $\T^2$ (for which $\pi_2(\T^2)$ is trivial). Klimek and Lesniewski \cite{k-l} constructed quantizations for higher genus Riemann surfaces (where $\pi_2$ is also trivial). I constructed \cite{haw8} such a quantization for any compact \Kahler\ manifold where $\pi_2(M)$ pairs trivially with $\omega$. Natsume, Nest, and Peter \cite{n-n-p} constructed such a quantization for any symplectic manifold with $\pi_2(M)$ trivial and $\pi_1(M)$ an exact group.

On the other hand, Li \cite{li} has constructed a very large class of examples of quantizations for any Poisson (or even \emph{almost} Poisson) manifold. In these examples, $\hbar$ takes any value in an interval, so this violates all possible integrality conditions (not to mention integrability conditions). However, his examples are badly behaved. In particular, the quantum algebras are ``much too big''; the classical algebra $\C_0(M)$ is actually a subalgebra of each of the quantum algebras. Nevertheless, these examples satisfy a definition of ``strict quantization'' which had previously seemed very reasonable.

So, the problem is not to understand how integrality conditions arise from the definition of quantization. The problem is to improve the definition by identifying good properties of quantization which do imply integrality conditions.

The sphere $S^2$ obviously plays a key role in the integrality condition \eqref{integer}, so the case of the symplectic sphere is absolutely fundamental. There is no way to understand integrality conditions without understanding the case of the symplectic sphere. That is what I am considering in this paper.

In a similar vein, Rieffel \cite[Thm.~7.1]{rie4} showed that there does not exist any noncommutative $\mathrm{SO}(3)$-equivariant product on $C^\infty(S^2)$ that can be completed to a \cs-algebra. In particular, there does not exist any equivariant strict deformation quantization of $S^2$ according to his original definition. The Berezin-Toeplitz quantization of $S^2$ is equivariant and satisfies a slightly weaker definition (quantization maps are not injective). With this weaker definition, Rieffel's proof (adapted from a theorem of Wassermann \cite{was}) actually shows that the Berezin-Toeplitz quantization is essentially the unique equivariant quantization of $S^2$. That in turn implies that the set of values of $\hbar$ cannot be connected. As Rieffel writes, this theorem leaves open the question of ``deformation quantizations which need not be invariant at all''. This is what I am addressing here, although with a slightly different notion of strict deformation quantization.

As there is no one standard definition of strict deformation quantization, I begin in Section~\ref{Definitions} by reviewing some of the various definitions that appear in the literature. I then motivate and present my definition of \emph{order $n$ strict deformation quantization} and explain why anything less than second order should not really be considered quantization.

Sections \ref{Algebraic} and \ref{Geometric} are purely motivational. In Section~\ref{Algebraic}, I briefly review the algebraic index theorem which Fedosov used to obtain his integrality result. In Section~\ref{Geometric}, I summarize the standard Berezin-Toeplitz quantization construction. This shows explicitly the structure we might expect in the quantization of $S^2$.

In Section \ref{General} I give the preliminary results which are not specific to $S^2$. Some of these are minor technical lemmas, but Lemma~\ref{Integer} is the key tool and shows an integrality property of $2\times 2$ matrices over a \cs-algebra.

Finally, in Section~\ref{Obstruction}, I prove the main results. I show that for a second order strict deformation quantization of $S^2$, the set of values of $\hbar$ cannot be connected. For an infinite order strict deformation quantization of $\C^\infty(S^2)$, I show that $\hbar$ is asymptotically restricted by Fedosov's integrality condition in terms of the characteristic class of the corresponding formal deformation quantization.

\subsection{Notation}
In this paper, I will use the notations $o^n(\hbar)$ and $\Or^n(\hbar)$ a great deal. These are always in the context of maps from $\hbar\in I\smallsetminus\{0\}$ to Banach spaces. 
\begin{definition}
$a(\hbar) = o^n(\hbar)$ if 
\[
\lim_{\hbar\to0} \hbar^{-n}\Norm{a(\hbar)} =0 \mbox,
\]
and $a(\hbar)=\Or^n(\hbar)$ if $\hbar^{-n}\Norm{a(\hbar)}$ is bounded for sufficiently small $\hbar$. More frequently, I write
\[
a(\hbar) \approx b(\hbar) \mod o^n(\hbar) 
\]
if $a(\hbar)-b(\hbar) = o^n(\hbar)$. I will always use $\approx$ in this sense.
\end{definition}
Any statement involving $o^n(\hbar)$ is really a statement about a limit, but I think this notation allows things to be written more clearly. It would really be more correct to write ``$a\in o^n(\hbar)$'', but I don't want to stray \emph{too} far from existing notational conventions.

$H^2(M)$ is \deRham\ cohomology with real coefficients.

Given an algebra $A$, the algebra of $m\times m$-matrices over $A$ is denoted $\Mat_m(A)$. The partial trace $\tr:\Mat_m(A)\to A$ takes the sum of the diagonal entries.

The standard Pauli matrices $\sigma_i\in\Mat_2(\co)$ are,
\[
\sigma_1 := \begin{pmatrix}0 & 1\\1 & 0\end{pmatrix}, \quad
\sigma_2 := \begin{pmatrix}0 & -i\\i & 0\end{pmatrix}, \quad
\sigma_3 := \begin{pmatrix}1 & 0\\0 & -1\end{pmatrix} .
\]
More importantly, these are related by $\sigma_i\sigma_j = \delta_{ij}+i\epsilon^k_{\;ij}\sigma_k$; that is,
\[
\sigma_1\sigma_2=-\sigma_2\sigma_1 = i\sigma_3, \quad \sigma_1^2=1,
\]
\emph{et cetera}.

The summation convention is used for repeated indices.

\section{Definitions of Quantization}
\label{Definitions}
In order to consider properties of quantization, it is necessary to first consider the various definitions of quantization. All of the definitions for \cs-algebraic deformation quantization are variations on the ideas proposed by Rieffel \cite{rie4,rie7,rie11}. Most of these involve a continuous field of \cs-algebras, either as a given structure, or as something whose existence and uniqueness is supposed to be guaranteed by the axioms.

All of the definitions include a parameter set $I\subseteq \R$, such that $0\in I$ is an accumulation point. This is the set of values for the parameter $\hbar$, and $\hbar=0$ is the ``classical'' value.

Rieffel's definition is stated in terms of a fixed vector space $\A_0$ with an $\hbar$-de\-pen\-dent product, involution, and norm. Using these norms, the vector space can be completed to a family of \cs-algebras $A_\hbar$. Instead of starting with a fixed vector space, we can instead start with a family of normed ${}^*$-algebras $\A_\hbar$ which are identified by bijective linear ``quantization maps'' $\Qh:\A_0\to\A_\hbar$. More generally, we can consider quantizations in which these are not bijections. Either way, each ${}^*$-algebra $\A_\hbar$ is to be completed to a \cs-algebra $A_\hbar$. These are supposed to form a continuous field, uniquely defined by the requirement that for any $a\in\A_0$, $\hbar\mapsto \Qh(a)$ is a continuous section. Finally, some definitions start with the collection of \cs-algebras or continuous field as a given structure which is not necessarily defined by the quantization maps.

The third setting is the most general, so all of the definitions I am reviewing can be stated in these terms. Let $I \subseteq \R$ with $0\in I$ an accumulation point. 
Let $\{A_\hbar\}_{\hbar\in I}$ be a collection of \cs-algebras.
 Let $\A_0\subset A_0$ be a dense ${}^*$-subalgebra. Let $\Qh : \A_0\to A_\hbar$ for each $\hbar\in I$ be linear maps such that $Q_0:\A_0\into A_0$ is the inclusion. Let $\A_\hbar\subset A_\hbar$ be the subalgebra generated by $\Im\Qh = \Qh(\A_0)$.

In most cases, the collection of \cs-algebras forms a continuous field, $A$, such that  for any $a\in\A_0$, $\hbar\mapsto \Qh(a)$ defines a continuous section, denoted $Q(a)\in\Gamma(I,A)$. I also refer to $Q:\A_0\to\Gamma(I,A)$ as a \emph{quantization map}.
Compatibility with a continuous field structure implies that for any $a,b,c\in\A_0$,
\[
\Norm{\Qh(a) + \Qh(b)^*} \text{ and }
\Norm{\Qh(a) \Qh(b) - \Qh(c)}
\]
are continuous functions of $\hbar\in I$. Conversely, these conditions imply \cite{rie11} that the \cs-algebras generated by each $\Im \Qh$ form a unique continuous field such that $Q(a)$ is a continuous section. This means that if we start with a given continuous field structure, then even if the quantization maps do not generate the given continuous field, they still generate a continuous subfield.

If there is not a continuous field structure, then these continuity properties are at least assumed to hold at $0\in I$.

With one exception to be discussed below, in all the definitions $\A_0$ is a commutative algebra of functions on a Poisson manifold, and the quantization maps are related to the Poisson bracket by the condition that $\forall a,b\in\A_0$
\beq
\label{Dirac}
[\Qh(a),\Qh(b)] \approx i\hbar \Qh(\{a,b\}) \mod o^1(\hbar) .
\eeq

After that, there are several variations in the definition. The index set $I\subseteq \R$ may or may not be connected or closed. The quantization maps $\Qh$ may be injective. The image of $\Qh$ may be a subalgebra of $A_\hbar$, or it may be dense, or it may generate a dense subalgebra. The quantization maps may or may not be ${}^*$-linear.

The table on the next page summarizes the substantive variations among some of the definitions in the literature.
\begin{table}[h]
\begin{tabular}{|cc||c|c|c|c|c|c|}
\hline
&Name &  $\C$ & $I$ & 
1-1 &
Alg & dense &  $*$ \\
\hline\hline
\cite{rie11} & \begin{tabular}{c}Strict deformation\\ quantization \end{tabular}  & \checkmark & \begin{tabular}{c}open\\ interval\end{tabular} & \checkmark &  \checkmark & \checkmark & \\
\hline
\cite{she}&\begin{tabular}{c}Operator deformation\\ quantization \end{tabular} & \checkmark 
& $[0,\epsilon)$ & \checkmark & & $\Im\Qh$ & \checkmark \\
\hline
 \cite{lan}&Strict quantization & & & & & $\Im \Qh$ & \checkmark  \\
\hline
 \cite{lan}& \begin{tabular}{c}Strict deformation\\ quantization \end{tabular} & & & \checkmark & \checkmark & \checkmark & \checkmark \\
\hline
\cite{lan}& Continuous quantization & \checkmark & & & & & \checkmark \\
\hline
\cite{l-r} & \begin{tabular}{c}Strict deformation\\ quantization \end{tabular} &\checkmark& & & & & \checkmark \\
\hline
\cite{l-r} & \begin{tabular}{c}Semi-strict\\ deformation quantization\end{tabular} & &  & & \checkmark & \checkmark & \\
\hline
\cite{n-n-p} & Strict quantization &\checkmark & $[0,\epsilon)$ & & & $\A_\hbar$ & \\
\hline
\cite{haw8} & \begin{tabular}{c}Strict deformation\\ quantization \end{tabular} & \checkmark & closed & & & $\A_\hbar$ & \\
\hline
\cite{li}& Strict quantization  &\checkmark & closed & & & & \\
&Faithful & & & \checkmark & & & \\
&Hermitian & & & & & & \checkmark \\
\hline
\cite{li} & \begin{tabular}{c}Strict deformation\\ quantization \end{tabular} & \checkmark & closed & \checkmark & \checkmark & & \\
\hline
\end{tabular}\medskip

\begin{description}
\item[$\C$] The \cs-algebras may form a continuous field. Otherwise, continuity is only required at $\hbar=0$.
\item[$I$] Any additional assumptions on the index set $I\subseteq \R$.
\item[1-1] The quantization maps may be assumed to be injective $\Qh:\A_0\into A_\hbar$.
\item[Alg] The image $\Im\Qh\subseteq A_\hbar$ may be a ${}^*$-subalgebra.
\item[dense] If the image is a ${}^*$-subalgebra, then it may be dense in $A_\hbar$. Otherwise, $\Im \Qh$ may be dense, or the ${}^*$-subalgebra $\A_\hbar$ generated by it may be dense.
\item[$*$] The quantization maps $\Qh$ may be ${}^*$-linear, i.e., $\Qh(a^*)=\Qh(a)^*$.
\end{description}
\end{table}

As this table should make clear, there is significant variation in both the definitions and terminology, and these are not correlated. There is no consistent feature of ``strict deformation quantization'' as opposed to ``strict quantization''. In the absence of a definitive definition, I shall consider the merits of each of these possible axioms.

First, I favor requiring a continuous field structure, because the idea of strict deformation quantization is to \emph{continuously} deform from $A_0$ to $A_\hbar$. I can see no aesthetic justification for requiring continuity at $0$ without requiring it at all $\hbar\in I$.

Some authors require the index set $I$ to be an interval. This is a natural requirement, given the idea of deformation. However, in practice it is not a good assumption for quantization. The Berezin-Toeplitz quantization of a compact \Kahler\ manifold \cite{b-m-s} is one of the best behaved examples of quantization, and in that case
\[
I = \left\{0,\dots,\tfrac13,\tfrac12,1\right\}
\]
which is very different from an interval. On the other hand, it is technically awkward to work with a continuous field over a pathological space, so I will assume that $I\subseteq\R$ is locally compact. 

The example of Berezin-Toeplitz quantization violates another of these tentative axioms: The quantization maps are not injective. This is unavoidable because the algebras $A_{\hbar>0}$ are finite dimensional. Note that the continuous field structure implies a sort of asymptotic injectivity; for any $a\neq0\in\A_0$,
\beq
\label{asymptotic.injective}
\lim_{\hbar\to0}\Norm{\Qh(a)} = \norm{a}\neq0
\eeq
so $a\not\in\ker\Qh$ for $\hbar$ sufficiently small. However, this does not imply that any of the maps $\Qh$ for $\hbar\neq 0$ are actually injective.

The most delicate issue is whether the image $\Im\Qh$ should be an algebra. This is true in the Berezin-Toeplitz example, but it is not clear if this is true in all nice examples of quantization. I also do not need this for the main result of this paper. Given this uncertainty, I am favoring the weaker definition and not assuming this property here. I will refer to a quantization with this property as \emph{algebraically closed}.

If the ${}^*$-subalgebras $\A_\hbar\subseteq A_\hbar$ generated by each $\Im\Qh$ are dense, then the continuous field structure is completely fixed by the quantization maps. This would be a nice property, but I do not need it for the main result here. I am also inclined to de-emphasize the role of any one quantization map. Instead, I think that the existence of a large class of equally good quantization maps is important. The tentative axiom that $\Im \Qh\subset A_\hbar$ is a dense subalgebra should probably be replaced by some sort of irreducibility property.

Finally, many definitions do not require the quantization map to be ${}^*$-linear. However, if the quantization map is at all well behaved with respect to the involution, then the ${}^*$-linear part $a \mapsto \tfrac12\left[Q(a)+Q(a^*)^*\right]$ will also be a perfectly good quantization map. For this reason, we can assume ${}^*$-linearity without any important loss of generality.

These axioms are actually too weak, because they do not require sufficient regularity at $\hbar=0$.

First note that compatibility with the continuous field structure implies that $\forall a,b\in\A_0$,
\beq
\label{order0}
\Qh(a) \Qh(b) \approx \Qh(ab) \mod o^0(\hbar) .
\eeq

I mentioned above that one definition does not include eq.~\eqref{Dirac}. That is Rieffel's definition \cite{rie11} for strict deformation quantization of a (possibly) noncommutative algebra. In that case, one cannot assume that the commutator vanishes at $\hbar=0$. Instead, Rieffel requires that
\beq
\label{order1}
\Qh(a) \Qh(b) \approx \Qh[ab + i \hbar C_1(a,b)] \mod o^1(\hbar) 
\mbox,
\eeq
where $C_1$ is a given Hochschild $2$-cocycle. The cohomology class of $C_1$ in $H^2(\A_0,\A_0)$ is required to be a ``noncommutative Poisson structure'' \cite{xu}; that is, to have vanishing Gerstenhaber bracket with itself. This definition is slightly awkward; it is more appropriate to require that $C_1$ belongs to a given Hochschild cohomology class, rather than being a given cocycle.

Equation \eqref{order1} requires the product to be expandable to first order in $\hbar$. Equation \eqref{Dirac} only requires this of the antisymmetric part of the product. In this way, eq.~\eqref{order1} is a slightly stronger condition. However, the case of noncommutative $\A_0$ suggests that eq.~\eqref{order1} is more natural.

Equations \eqref{order0}\ and \eqref{order1} are the beginning of a sequence of possible conditions.
\begin{definition}
A quantization map $Q$ is of \emph{order $n$} (for some $n=0,1,2,\dots$) if $\forall a,b\in\A_0$
\begin{subequations}
\label{order.n}
\beq
\Qh(a) \Qh(b) \approx  \Qh [a\starn b] \mod o^n(\hbar)
\eeq
where 
\beq
a \starn  b := ab + \sum_{j=1}^n (i\hbar)^j C_j(a,b) \mbox,
\eeq
\end{subequations}
and $C_j:\A_0\times\A_0\to\A_0$ are bilinear maps. It is of \emph{order $\infty$} if it is of order $n$ for any finite $n$.
\end{definition}
The latter definition is consistent, because order $n$ implies order $m<n$.

This definition has two aspects. If the quantization is not algebraically closed but is of order $n$, then $\Im \Qh$ is \emph{approximately} algebraically closed to order $\hbar^n$. So, this is partly a weaker version of the algebraic closedness axiom. On the other hand, if the quantization is algebraically closed and the quantization maps are injective, then the quantization actually defines a variable product on $\A_0$:
\beq
\label{Qproduct}
\Qh^{-1}[\Qh(a)\Qh(b)] ,
\eeq
and the order $n$ condition means precisely that this is an $n$ times differentiable function of $\hbar$ at $0$; in fact, $a \starn b$ is the $n$'th order Taylor expansion of \eqref{Qproduct} about $0$. So, ``order $n$'' is both an algebraic closedness and a differentiability condition. 

The same aesthetic principle that suggests using a continuous field of \cs-algebras implies that if we impose a differentiability condition at $0$, we should impose such a condition everywhere. However, this condition will not be needed here, so I will leave that discussion to a future paper.

\begin{definition}
Given a continuous field and a quantization map, $Q:\A_0\to \Gamma(I,A)$, define $\Gn(I,A)$ to be the set of sections $a \in \Gamma(I,A)$ for which there exist elements $a_0,\dots,a_n\in\A_0$ such that
\beq
\label{expansion}
a(\hbar) \approx \Qh(a_0 + a_1 \hbar + \dots + a_n\hbar^n) \mod o^n(\hbar) .
\eeq
\end{definition}
Note that a quantization map $Q$ is of order $n$ if and only if $\Gn(I,A)\subset \Gamma(I,A)$ is a subalgebra.
The algebra $\Gn(I,A)$ should be thought of as the set of sections of $A$ that are $n$-times differentiable at $\hbar=0$. This lemma gives the analogue of a Taylor expansion with the quantization map playing the role of a flat connection.
\begin{lem}
\label{Formal}
Given a quantization map $Q$ and a section $a\in \Gn(I,A)$, there exists a \emph{unique} degree $n$ polynomial satisfying eq.~\eqref{expansion}
\end{lem}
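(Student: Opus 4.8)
The existence of the polynomial is immediate from the definition of $\Gn(I,A)$, so the entire content of the lemma is the uniqueness claim. The plan is to show that the only degree $n$ polynomial with coefficients in $\A_0$ whose image under $Q$ is $o^n(\hbar)$ is the zero polynomial; uniqueness then follows at once by applying this to the difference of two expansions of the same section $a$.

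So suppose $p(\hbar)=\sum_{j=0}^n a_j\hbar^j$ and $q(\hbar)=\sum_{j=0}^n b_j\hbar^j$, with $a_j,b_j\in\A_0$, both satisfy \eqref{expansion}. Subtracting the two defining relations and using linearity of each $\Qh$, I would set $c_j:=a_j-b_j\in\A_0$ and reduce the problem to proving that
\[
\sum_{j=0}^n \hbar^j\,\Qh(c_j)=o^n(\hbar)
\]
forces $c_j=0$ for every $j$.

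The key tool is the asymptotic injectivity \eqref{asymptotic.injective}: for each fixed $c\in\A_0$ one has $\Norm{\Qh(c)}\to\norm{c}$ as $\hbar\to0$, so in particular each $\Norm{\Qh(c_j)}$ stays bounded near $0$. I would argue by induction on the lowest surviving index. Since $o^n(\hbar)\subseteq o^k(\hbar)$ for every $k\le n$, the hypothesis also gives $\hbar^{-k}\Norm{\sum_{j}\hbar^j\Qh(c_j)}\to0$. Assuming inductively that $c_0=\dots=c_{k-1}=0$, I would multiply through by $\hbar^{-k}$ to obtain
\[
\Qh(c_k)+\sum_{j=k+1}^n \hbar^{j-k}\,\Qh(c_j) = o^0(\hbar),
\]
whose norm tends to $\norm{c_k}$, because each tail term carries a strictly positive power of $\hbar$ against a bounded factor and hence vanishes in the limit. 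As this norm must tend to $0$, I conclude $\norm{c_k}=0$, hence $c_k=0$, and the induction closes, giving $a_j=b_j$ for all $j$.

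The only delicate point is that $\Qh(c_j)$ is itself $\hbar$-dependent, so one cannot simply read off Taylor coefficients by naive evaluation at $\hbar=0$; the argument hinges precisely on combining the continuity statement \eqref{asymptotic.injective} with the nesting $o^n(\hbar)\subseteq o^k(\hbar)$ to isolate one coefficient at a time. Beyond careful bookkeeping of the orders, I expect no substantive obstacle.
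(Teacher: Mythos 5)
Your proposal is correct and follows essentially the same route as the paper: reduce to showing that a polynomial whose image under $Q$ is $o^n(\hbar)$ must vanish, divide out the lowest power $\hbar^k$, and use continuity of sections at $\hbar=0$ (i.e.\ \eqref{asymptotic.injective}) to conclude the coefficient $c_k$ is zero. The paper phrases this as a single contradiction at the first nonzero coefficient rather than an induction, but the content is identical.
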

\begin{proof}
Existence is guaranteed by the definition of $\Gn(I,A)$.

If uniqueness is violated then there exist $a_0,\dots,a_n\in\A_0$ (not all $0$) such that 
\[
\Qh(a_0+a_1\hbar+\dots+a_n\hbar^n) = o^n(\hbar) .
\]
Suppose that $a_k$ is the first nonzero term, then
\[
\Qh(a_k+ a_{k+1}\hbar + \dots + a_n\hbar^{n-k}) = o^{n-k}(\hbar) \mbox,
\]
but setting $\hbar=0$ shows that $ a_k =Q_0(a_k) =0$. By contradiction, this proves uniqueness.
\end{proof}

Li \cite{li} has constructed a disturbing class of examples which satisfy a reasonable seeming definition of ``strict quantization'' but which really should not be considered as quantizations. One of the surprising features of his construction is that it applies not just to any Poisson manifold, but to any manifold with an antisymmetric bivector field (an ``almost Poisson manifold''). The problem is that eq.~\eqref{Dirac} does not imply that the bracket $\{\;\cdot\;,\;\cdot\;\}$ satisfies the Jacobi identity.

This is a little surprising, since a commutator always satisfies the Jacobi identity. What goes wrong is that the error in eq.~\eqref{Dirac} is not controlled enough. If we apply eq.~\eqref{Dirac} to a double commutator, we only get
\[
[\Qh(a),[\Qh(b),\Qh(c)]] \approx -\hbar^2 \Qh(\{a,\{b,c\}\}) \mod o^1(\hbar)
\]
which is vacuous. The problem is that the commutator of $\Qh(a)$ with an error term of order $o^1(\hbar)$ is only guaranteed to be of order $o^1(\hbar)$, because we know nothing more about that error term.

This problem can be fixed by requiring the quantization map to be second order, as is implied by a more general result:
\begin{prop}
\label{Expand}
Given an order $n$ quantization map $Q$, the bilinear operation $\starn: \A_0\otimes\A_0 \to \A_0[\hbar]$ is uniquely determined by $Q$, and  is approximately associative: $\forall a,b,c\in\A_0$, 
\[
a\starn (b\starn c) \equiv (a\starn b)\starn c  \mod \hbar^{n+1} .
\]
\end{prop}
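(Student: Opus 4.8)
\section*{Proof proposal}

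The plan is to deduce everything from the associativity of the product in each $A_\hbar$, together with the elementary vanishing principle underlying Lemma~\ref{Formal}: if $\Qh\bigl(\sum_{j=0}^m p_j\hbar^j\bigr)=o^n(\hbar)$ for some $p_0,\dots,p_m\in\A_0$, then $p_0=\dots=p_n=0$. This follows by evaluating at $\hbar=0$ (using $Q_0=\id$ and the boundedness of each $\Norm{\Qh(p_j)}$ near $0$ from eq.~\eqref{asymptotic.injective}) to kill $p_0$, then dividing by $\hbar$ and iterating; crucially, the control $o^n(\hbar)$ only reaches the coefficient of $\hbar^n$. Uniqueness of $\starn$ is then immediate: if $\starn$ and $\starpn$ both satisfy eq.~\eqref{order.n}, then $\Qh[a\starn b-a\starpn b]=o^n(\hbar)$, and since the bracketed polynomial has degree $n$ with vanishing constant term, the principle gives $C_j=C_j'$ for every $j$. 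Throughout I extend $\starn$ to $\A_0[\hbar]$ by $\hbar$-linearity and write $C_0(a,b):=ab$, so that $(a\starn b)\starn c=\sum_{k=0}^n(i\hbar)^k\,C_k(a,b)\starn c$ and symmetrically for $a\starn(b\starn c)$.

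The core of the associativity claim is the pair of approximations
\[
\Qh(a)\Qh(b)\Qh(c)\approx\Qh[(a\starn b)\starn c]\approx\Qh[a\starn(b\starn c)]\mod o^n(\hbar).
\]
For the first, I group as $[\Qh(a)\Qh(b)]\Qh(c)$ and substitute $\Qh(a)\Qh(b)\approx\Qh[a\starn b]$; the resulting error is an $o^n(\hbar)$ term right-multiplied by $\Qh(c)$, which remains $o^n(\hbar)$ because $\Norm{\Qh(c)}$ is bounded near $0$. Expanding $\Qh[a\starn b]=\sum_{k=0}^n(i\hbar)^k\Qh[C_k(a,b)]$ and applying eq.~\eqref{order.n} to each factor $\Qh[C_k(a,b)]\Qh(c)$ introduces errors of the form $\hbar^k\cdot o^n(\hbar)=o^n(\hbar)$, and the surviving terms assemble exactly into $\Qh[(a\starn b)\starn c]$. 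The second approximation is obtained the same way after grouping as $\Qh(a)[\Qh(b)\Qh(c)]$; it is the associativity of the \cs-algebra product that guarantees both groupings start from the \emph{same} element $\Qh(a)\Qh(b)\Qh(c)$.

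Combining the two approximations yields $\Qh[(a\starn b)\starn c-a\starn(b\starn c)]=o^n(\hbar)$, in which the argument is a polynomial in $\hbar$ of degree at most $2n$ with coefficients in $\A_0$. Applying the vanishing principle of the first paragraph forces its coefficients of $\hbar^0,\dots,\hbar^n$ to vanish, which is precisely the asserted congruence $a\starn(b\starn c)\equiv(a\starn b)\starn c\mod\hbar^{n+1}$. The one delicate point — and the reason the congruence holds only modulo $\hbar^{n+1}$ rather than as an exact equality — is this final descent from sections back to polynomials: the $o^n(\hbar)$ bound provided by an order $n$ quantization controls the section only through order $\hbar^n$, so it says nothing about the coefficients of $\hbar^{n+1},\dots,\hbar^{2n}$. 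Everything else is bookkeeping with the absorption rules $o^n(\hbar)\cdot\Qh(c)=o^n(\hbar)$ and $\hbar^k\cdot o^n(\hbar)=o^n(\hbar)$.
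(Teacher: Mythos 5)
Your proposal is correct and follows essentially the same route as the paper: uniqueness via the vanishing principle of Lemma~\ref{Formal}, then applying the order $n$ condition to both groupings of $\Qh(a)\Qh(b)\Qh(c)$, subtracting, and invoking that principle again to kill the associator's coefficients through order $\hbar^n$. You merely spell out details the paper leaves implicit (absorbing $o^n(\hbar)\cdot\Qh(c)$ via boundedness of $\Norm{\Qh(c)}$ near $0$, extending $\starn$ to $\A_0[\hbar]$ by $\hbar$-linearity, and noting that the degree-$2n$ associator is only controlled up to $\hbar^n$), all of which is accurate.
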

\begin{proof}
The $\starn$ is uniquely defined by Lemma~\ref{Formal}.

Applying eq.~\eqref{order.n} to the product of $a,b,c\in\A_0$,
\[
Q(a)[Q(b)Q(c)] \approx Q[a\starn (b\starn c)] \mod o^n(\hbar) .
\]
There is an analogous expression with the other arrangement of parenthesis. Subtracting shows that the associator satisfies,
\[
Q[a\starn (b\starn c)-(a\starn b)\starn c] = o^n(\hbar) .
\]
By Lemma~\ref{Formal}, this shows that the coefficients of the associator vanish up to order $\hbar^n$.
\end{proof}
This implies that $\starn $ defines an associative product on $\A_0[\hbar]/\hbar^{n+1}$.

\begin{cor}
If $Q$ is of order $1$, then $C_1$ is a Hochschild cocycle, defining an element of $H^2(\A_0,\A_0)$. If $\A_0$ is commutative then $\{a,b\}:=C_1(a,b)-C_1(b,a)$ is a derivation in both arguments.
\end{cor}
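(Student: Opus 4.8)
The plan is to specialize the approximate associativity of Proposition~\ref{Expand} to $n=1$ and extract the coefficient of $\hbar$. With $Q$ of order $1$ the product is $a\starn b = ab + i\hbar\,C_1(a,b)$ (taking $n=1$), and Proposition~\ref{Expand} asserts $a\starn(b\starn c)\equiv(a\starn b)\starn c \mod \hbar^2$. Expanding both sides by bilinearity and dropping the $(i\hbar)^2$ terms, the constant terms agree ($abc$ on each side), while equating the coefficients of $i\hbar$ gives
\[
a\,C_1(b,c) + C_1(a,bc) = C_1(a,b)\,c + C_1(ab,c) .
\]
This is precisely the vanishing of the Hochschild coboundary, $(\delta C_1)(a,b,c)=0$, so $C_1$ is a $2$-cocycle and determines a class in $H^2(\A_0,\A_0)$.

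For the bracket, I would observe that $\{a,b\} := C_1(a,b)-C_1(b,a)$ is antisymmetric by construction, so it is enough to verify the Leibniz rule in one slot, say $\{a,bc\}=\{a,b\}\,c + b\,\{a,c\}$; the rule in the other slot then follows by antisymmetry. The strategy is to rewrite both $C_1(a,bc)$ and $C_1(bc,a)$ entirely in terms of $C_1$ evaluated on the generators $a,b,c$ by applying the cocycle identity above to the three permuted triples $(a,b,c)$, $(b,a,c)$, and $(b,c,a)$, and then to substitute everything into $\{a,bc\}-\{a,b\}\,c - b\,\{a,c\}$.

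Commutativity of $\A_0$ enters only at this final step: it permits the identifications $a\,C_1(b,c)=C_1(b,c)\,a$ and $C_1(b,ca)=C_1(b,ac)$, the latter being what lets a single cocycle relation reduce $C_1(bc,a)$ to generator terms. After these substitutions all six contributions cancel in pairs, yielding the Leibniz rule. I expect this cancellation to be the only genuine labor; there is no conceptual obstacle, as the result is the familiar fact that the skew part of a Hochschild $2$-cocycle on a commutative algebra is a biderivation.
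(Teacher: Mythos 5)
Your proposal is correct and follows exactly the route the paper intends: the paper leaves this corollary unproved as an immediate consequence of Proposition~\ref{Expand}, and extracting the coefficient of $i\hbar$ from the approximate associativity of $\star^1$ does give the Hochschild cocycle identity, while the permuted instances of that identity together with commutativity yield the Leibniz rule for the skew part. Your computation checks out (the $C_1(ab,c)$ terms and the $a\,C_1(b,c)$ versus $C_1(b,c)\,a$ terms cancel, leaving the biderivation identity), so there is nothing to add.
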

The point is that this alone does not imply the Jacobi identity. The commutator does satisfy the Jacobi identity. The Jacobi identity for the bracket should appear at order $\hbar^2$, but if $Q$ is only of order $1$, then the error terms are of order $o^1(\hbar)$.

\begin{cor}
If $\Qh$ is of order $2$, then the cohomology class of $C_1$ has vanishing Gerstenhaber bracket $[C_1,C_1] = 0 \in H^3(\A_0,\A_0)$ (i.e., it is a noncommutative Poisson structure \cite{xu}). In the case of $\A_0$ commutative, the bracket $\{\;\cdot\;,\;\cdot\;\}$ satisfies the Jacobi identity and is thus a Poisson bracket.
\end{cor}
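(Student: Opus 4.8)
The statement has two parts, and I would treat them separately. The plan for the first part, that the Gerstenhaber bracket $[C_1,C_1]$ vanishes in $H^3(\A_0,\A_0)$, is purely algebraic: it follows by expanding the approximate associativity of Proposition~\ref{Expand} one order further than was needed for the order~$1$ corollary. For the second part, the Jacobi identity in the commutative case, I would argue directly at the level of the $C^*$-algebra, exploiting the fact that the commutator bracket satisfies Jacobi \emph{exactly}, together with the error control that order~$2$ — and only order~$2$ — provides.

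For the first part, I would set $t=i\hbar$ and expand the identity $a\starn(b\starn c)\equiv(a\starn b)\starn c \mod \hbar^{3}$ of Proposition~\ref{Expand} (specialized to $n=2$) in powers of $t$. The coefficient of $t^{0}$ is ordinary associativity; the coefficient of $t^{1}$ reproduces the Hochschild cocycle condition $\delta C_1=0$ already recorded in the previous corollary, where $\delta$ denotes the Hochschild coboundary. The coefficient of $t^{2}$ reads
\[
C_1\bigl(C_1(a,b),c\bigr)-C_1\bigl(a,C_1(b,c)\bigr)
= a\,C_2(b,c)-C_2(ab,c)+C_2(a,bc)-C_2(a,b)\,c .
\]
The left-hand side is exactly the Gerstenhaber circle product $(C_1\circ C_1)(a,b,c)$ (recall $(f\circ g)(a,b,c)=f(g(a,b),c)-f(a,g(b,c))$ for $2$-cochains), while the right-hand side is $(\delta C_2)(a,b,c)$. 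Since the graded-symmetry of the Gerstenhaber bracket gives $[C_1,C_1]=2\,C_1\circ C_1$ on $2$-cochains, this says $[C_1,C_1]=2\,\delta C_2$ is a Hochschild coboundary, hence $[C_1,C_1]=0$ in $H^3(\A_0,\A_0)$; that is, $C_1$ is a noncommutative Poisson structure.

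For the commutative case I would not pass through $H^{3}$ at all. Using \eqref{order.n} with $n=2$ and $ab=ba$, the commutator expands as
\[
[\Qh(a),\Qh(b)]=i\hbar\,\Qh(\{a,b\})-\hbar^{2}\,\Qh\bigl(C_2^-(a,b)\bigr)+o^{2}(\hbar),
\]
with $C_2^-(a,b):=C_2(a,b)-C_2(b,a)$; the essential gain over \eqref{Dirac} is that the error is now $o^{2}(\hbar)$ rather than merely $o^{1}(\hbar)$. Feeding this into $[[\Qh(a),\Qh(b)],\Qh(c)]$ and using that $\Qh(c)$ is bounded as $\hbar\to0$ (so $[o^{2}(\hbar),\Qh(c)]=o^{2}(\hbar)$), that $\hbar^{2}[\Qh(C_2^-(a,b)),\Qh(c)]=\Or^{3}(\hbar)$, and the leading relation $[\Qh(\{a,b\}),\Qh(c)]=i\hbar\,\Qh(\{\{a,b\},c\})+o^{1}(\hbar)$ from \eqref{Dirac}, every higher term collapses into $o^{2}(\hbar)$ and one is left with
\[
[[\Qh(a),\Qh(b)],\Qh(c)]=-\hbar^{2}\,\Qh(\{\{a,b\},c\})+o^{2}(\hbar).
\]
Summing cyclically over $a,b,c$, the left-hand side vanishes identically by the Jacobi identity for the commutator in $A_\hbar$, so $\hbar^{2}\Qh(J)=o^{2}(\hbar)$, where $J:=\{\{a,b\},c\}+\{\{b,c\},a\}+\{\{c,a\},b\}$ is the Jacobiator. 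Dividing by $\hbar^{2}$ gives $\Norm{\Qh(J)}\to0$, whence $J=0$ by the asymptotic injectivity \eqref{asymptotic.injective}. Together with antisymmetry and the derivation property from the previous corollary, this makes $\{\;\cdot\;,\;\cdot\;\}$ a Poisson bracket.

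The main obstacle is the error bookkeeping in the second part, and it is precisely the point the preceding discussion flags: if one had only the order~$1$ relation \eqref{Dirac}, the inner commutator would carry an $o^{1}(\hbar)$ error whose commutator with $\Qh(c)$ is again only $o^{1}(\hbar)$, so the cyclic sum would yield the vacuous $\hbar^{2}\Qh(J)=o^{1}(\hbar)$. It is exactly the upgrade from $o^{1}(\hbar)$ to $o^{2}(\hbar)$ supplied by order~$2$ that renders the Jacobiator detectable, so verifying that every discarded term genuinely lies in $o^{2}(\hbar)$ is the step demanding care.
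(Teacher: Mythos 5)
Your proposal is correct and follows essentially the route the paper intends: the paper states this corollary without a separate proof, as an immediate consequence of Proposition~\ref{Expand} (your order-$t^2$ coefficient computation giving $[C_1,C_1]=2\,\delta C_2$), and your commutator argument for the Jacobi identity is exactly the mechanism sketched in the paper's preceding discussion of the double commutator, where the upgrade from $o^1(\hbar)$ to $o^2(\hbar)$ errors is identified as the decisive point. Your error bookkeeping (in particular $\hbar^2[\Qh(C_2^-(a,b)),\Qh(c)]=o^2(\hbar)$ and the use of \eqref{asymptotic.injective} to conclude $J=0$) is sound.
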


It is normally assumed that a quantization of a manifold corresponds to a Poisson structure. However, if the quantization is not algebraically closed or of at least second order,  then the Jacobi identity is really an unnecessary and artificially imposed condition. For this reason, it seems that we should always require a quantization to be of order at least $2$.

To summarize, here is the main definition I will be using in this paper:
\begin{definition}
Let $\A_0$ be a ${}^*$-algebra and $\pi\in H^2(\A_0,\A_0)$ a real Hochschild cohomology class.
An \emph{order $n$ strict deformation quantization} $(I,A,Q)$ of $(\A_0,\pi)$ consists of:
a locally compact subset $I \subseteq \R$ with $0\in I$ an accumulation point,
a continuous field of \cs-algebras $A$ over $I$,
and a ${}^*$-linear map $Q:\A_0\to \Gamma(I,A)$ such that:
\begin{enumerate}
\item At $0\in I$ this is an inclusion $Q_0:\A_0\into A_0$ of $\A_0$ as a dense ${}^*$-subalgebra; 
\item $Q$ satisfies the order $n$ condition \eqref{order.n};
\item if $n\geq1$, then the order $1$ term in the expansion belongs to the given cohomology class $C_1\in \frac12\pi$.
\end{enumerate}
Such a quantization is \emph{algebraically closed} if each $\Im \Qh \subseteq A_\hbar$ is a ${}^*$-subalgebra. It is \emph{unital} if all the algebras $\A_0$ and $A_\hbar$ and the quantization maps $\Qh:\A_0\to A_\hbar$ are unital.
This is an (order $n$) \emph{strict deformation quantization of a Poisson manifold} $M$ if $\C_0(M)\subseteq A_0\subseteq \C_{\mathrm b}(M)$ and $\pi$ is the Poisson structure.
\end{definition}

As noted above, existence of a second order quantization implies the vanishing of the Gerstenhaber bracket $[\pi,\pi]=0\in H^3(\A_0,\A_0)$. For quantization of a Poisson manifold, $\pi\in H^2(\A_0,\A_0)$ implies that $\A_0$ is closed under the Poisson bracket.

This definition encompasses a spectrum of concepts. The weakest --- a $0$'th order strict deformation quantization --- is really just a continuous field with a preferred collection of sections. Anything less than second order should not really be considered quantization.

The class of examples constructed by Li \cite{li} can be seen as a demonstration of this. His construction applies to (nonintegrable) almost Poisson manifolds because the quantization map is not required to be second order. 

If a strict deformation quantization is of infinite order, then this means precisely that the product has an asymptotic expansion, which defines a formal deformation quantization. This formal product is not necessarily differential (i.~e., the $C_j$'s may not be bidifferential). However, any formal deformation quantization is equivalent (isomorphic as a $\co[[\hbar]]$-algebra) to a differential one \cite{kon}. 

In the case of Berezin-Toeplitz quantization of a compact \Kahler\ manifold, Bordemann, Meinrenken, and Schlichenmaier \cite{b-m-s} have shown that such an asymptotic expansion exists. Thus that quantization is of infinite order and in particular of second order.

\section{Algebraic Index Theorem}
\label{Algebraic}
Fedosov's integrality results are stated in the context of formal deformation quantization and follow from the algebraic index theorem.

A formal deformation quantization of a Poisson manifold \cite{fed,kon,wei} is an associative product on the space $\C^\infty(M)[[\hbar]]$ of formal power series that reduces to ordinary multiplication modulo $\hbar$ and whose commutator is given to first order by the Poisson bracket. Write $\AA^\hbar$ for the space $\C^\infty(M)[[\hbar]]$ with this product.

It is also possible to consider just an order $n$ formal deformation quantization --- that is, an associative product on $\C^\infty(M)[\hbar]/\hbar^{n+1}$. Proposition~\ref{Expand} shows that an order $n$ strict deformation quantization determines an order $n$ formal deformation quantization.

Two formal deformation quantizations are \emph{equivalent} if the algebras are isomorphic as $\co[[\hbar]]$-algebras. The isomorphism is a $\co[[\hbar]]$-linear map $G: \C^\infty(M)[[\hbar]]\to \C^\infty(M)[[\hbar]]$ intertwining the products. 

A formal deformation quantization of a symplectic manifold $(M,\omega)$ determines a characteristic class $\theta \in \hbar^{-1}H^2(M)[[\hbar]]$ which always equals $\frac{[\omega]}{2\pi\hbar}$ plus terms of nonnegative degree. Formal deformation quantizations of $(M,\omega)$ are classified by $\theta$ up to equivalences with $G$ the identity modulo $\hbar$ \cite{wei,n-t3}.

A formal deformation quantization of $(M,\omega)$ has a unique $\co[[\hbar]]$-linear trace (modulo normalization) \cite{fed}
\[
\Tr_\hbar : \AA^\hbar \to \hbar^{-\frac12\dim M} \co[[\hbar]] .
\]
There is a natural choice of normalization.

This trace is the setting for the algebraic index theorem. The statement of the theorem is slightly simpler if $M$ is compact. Suppose that $e_0=e_0^2\in \Mat_m[\C^\infty(M)]$ is an idempotent matrix of smooth functions. Then there exists an idempotent $e_\hbar\in\Mat_m(\AA^\hbar)$ such that $e_\hbar \equiv e_0 \mod \hbar$. For any such idempotent, the algebraic index theorem \cite{fed,n-t3} shows that its trace is 
\beq
\label{index}
\Tr_\hbar e_\hbar = \int_M e^{\theta} \wedge \Ahat(TM) \wedge \ch(e_0) .
\eeq
Here, $\ch e_0\in H^\bullet(M)$ is the Chern character of the image vector bundle $e_0\,(\co^m\times M)$. Note that this trace does not depend upon the choice of $e_\hbar$. It only depends upon the topology of $M$, the characteristic class $\theta$, and the class of $e_0$ in $K^0(M)$.

Now, let $M=S^2\subset \R^3$ be the unit sphere in Euclidean space and take the induced volume form as symplectic form. Because this is $2$-dimensional, the $\Ahat$ class is trivial and eq.~\eqref{index} simplifies to
\begin{align*}
\Tr_\hbar e_\hbar &= \int_{S^2} e^{\theta} \wedge \ch(e_0)
= \int_{S^2} (1+\theta)\wedge (\rk e_0 + c_1(e_0)) \\
&= \rk e \int_{S^2} \theta + \int_{S^2} c_1(e_0) .
\end{align*}
In particular, the normalization of the trace is
\beq
\label{Tr1}
\Tr_\hbar 1 = \int_{S^2} \theta .
\eeq

The $K$-theory $K^0(S^2)$ is generated by the classes of $1$ and the Bott projection. Let $x^i$ be the Cartesian coordinates on $S^2\subset\R^3$. Let $\sigma_i\in \Mat_2(\co)$ be the Pauli matrices. The Bott projection $\eb\in\Mat_2[\C^\infty(S^2)]$ is
\beq
\label{Bott}
\eb := \tfrac12(1 + \sigma_i x^i) 
= \frac12 +\frac12\begin{pmatrix}
x^3 & x^1 - i x^2\\
x^1 + i x^2 & - x^3
\end{pmatrix}  .
\eeq
This has rank $1$ and Chern number $\int_{S^2}c_1(\eb) = 1$, so
\beq
\label{TrBott}
\Tr_\hbar e_\hbar = 1+\int_{S^2} \theta  .
\eeq
Because $\rk \eb=1$, this is actually not the best choice of generator for $K$-theory. It would be better to use $[\eb]-[1] \in K^0(S^2)$, which actually lies in the reduced $K$-theory. This corresponds to $\Tr_\hbar e_\hbar - \Tr_\hbar 1 = 1$. The class $[\eb]-[1]$ is the protagonist of this paper.

Fedosov \cite{fed} has applied this algebraic index theorem to find restrictions on $I$ for ``asymptotic operator representations'' (AOR's) of a formal deformation quantization. Any infinite order strict deformation quantization $(I,A,Q)$ of a symplectic manifold determines a formal deformation quantization, and if the algebras $A_\hbar$ are realized as concrete \cs-algebras $A_\hbar\subset\Li(\Hi)$, then $Q$ is an AOR of the formal deformation quantization. 

Suppose that the operators $\Qh(f)\in A_\hbar\subset\Li(\Hi)$ for $f\in\A_0$ are all trace class, and that the formal trace is the asymptotic expansion $\Tr \Qh(f)\sim \Tr_\hbar f$ of the operator trace, then for any idempotent, $\Tr_\hbar e_\hbar$ must be the asymptotic expansion of an integer valued function on $I\smallsetminus\{0\}$. This is a consequence of the fact that the operator trace of an idempotent is always an integer.

The trouble is that these assumptions do not generally hold for infinite order strict deformation quantizations of symplectic manifolds. 
In some very well behaved examples of strict deformation quantization, there does exist a unique trace, but it is not integer-valued on idempotents.
For example, in the ``noncommutative torus'' quantization of $\T^2$, for irrational values of $\hbar$, the set of traces of idempotents is dense in $[0,1]$ (see \cite{rie1}). That explains why the noncommutative torus can be defined over $I=\R$, violating Fedosov's restrictions on $\hbar$.

In order to go beyond Fedosov's result and find general restrictions on $I$, we must not assume \emph{a priori} that there exist any traces on a strict deformation quantization. Nevertheless, Fedosov's ideas do give an important heuristic guideline.

\section{Berezin-Toeplitz Quantization}
\label{Geometric}
The Berezin-Toeplitz construction \cite{b-m-s}  is a fairly general and very well behaved quantization construction. It provides some hints to the possible structure of an arbitrary strict deformation quantization of $S^2$.

Let $M$ be a compact, connected \Kahler\ manifold with symplectic form $\omega$ and such that the cohomology class $[\tfrac\omega{2\pi}]\in H^2(M)$ is integral. Then there exists a holomorphic line bundle $L\to M$ with a Hermitian inner product and curvature equal to $\omega$.

The $N$-fold tensor power $L^{\otimes N}$ is also a holomorphic line bundle with inner product and has curvature $N\omega$. Using the inner product and the symplectic volume form, we can construct a Hilbert space $L^2(M,L^{\otimes N})$ from sections of $L^{\otimes N}$. The set of holomorphic sections of $L^{\otimes N}$ is a finite dimensional Hilbert subspace $\Hi_N \subset L^2(M,L^{\otimes N})$. Let $\Pi_N : L^2(M,L^{\otimes N})\onto \Hi_N$ be the orthogonal projection onto this subspace. 

The Hilbert space $L^2(M,L^{\otimes N})$ is a module of the algebra $\C(M)$ of continuous functions. For any $f\in\C(M)$ and $\psi\in\Hi_N$, $T_N(f)\lvert\psi\rangle := \Pi_N f\lvert\psi\rangle$ defines a map $T_N:\C(M)\to \Li(\Hi_N)$ from functions to operators (matrices) on $\Hi_N$.

This defines an algebraically closed, infinite order, strict deformation quantization of $M$ as follows. The index set is
\[
I := \{0,\dots,\tfrac13,\tfrac12,1\} .
\]
The algebras are  $\A_0:=\C^\infty(M)$, $A_0:=\C(M)$, $A_{1/N} := \Li(\Hi_N)$. The quantization maps are $Q_{1/N} = T_N$ restricted to $\C^\infty(M)$. These quantization maps are actually surjective and define a unique continuous field structure on $\{A_\hbar\}_{\hbar\in I}$.

Now consider the sphere. Let $M=S^2\subset\R^3$ be the standard unit sphere in Euclidean space with the symplectic form equal to the induced volume form. This means that
\[
\int_{S^2} \omega = 4\pi
\]
so $\frac12\omega$ satisfies the integrality condition. 
With this choice, the Poisson brackets of the Cartesian coordinates (of the embedding) satisfy the standard $\su(2)$ relations.
The downside of this normalization is that we should take $L$ to have curvature $\frac12\omega$, and identify $\hbar = \frac2N$.

The sphere $S^2$ is of course an $\SU(2)$-symmetric space, and the quantization can be carried out equivariantly. 
The Hilbert space $\Hi_N$ is $N+1$-dimensional, carrying the spin-$\frac N2$ irreducible representation of $\SU(2)$.

Let $x^i\in\C^\infty(S^2)$ be the Cartesian coordinates of the embedding $S^2\subset\R^3$. The quantizations of these functions are 
\[
Q_{\hbar}(x^i) = T_N(x^i)  = \tfrac{2}{N+2} J^i
\]
where $J^i$ are the standard self-adjoint $\su(2)$-generators. 

Again, let $\sigma_i \in \Mat_2(\co)$ be the Pauli matrices and $\eb\in \Mat_2(\A_0)$ the Bott projection \eqref{Bott}. Applying the quantization map gives a matrix $\Qh(\eb) = \frac12 + \frac1{N+2} \sigma_i J^i \in \Mat_2(A_\hbar)$ with only two eigenvalues, $0$ and $\frac{N+1}{N+2}$. This is easily corrected to give a true projection,
\beq
\label{equivariant.e}
e(\hbar) = \tfrac{N+2}{2(N+1)} + \tfrac1{N+1} \sigma_i J^i  .
\eeq
From this, we can explicitly compute the trace, $\Tr e(\hbar) = N+2$. 

In fact, by comparing the traces $\Tr 1 = N+1=\frac2\hbar+1$ and $\Tr e(\hbar) = N+2=\frac2\hbar+2$ with equations \eqref{Tr1} and \eqref{TrBott}, we can deduce that if a formal deformation quantization is extracted from this Berezin-Toeplitz quantization, then its characteristic class is $\int_{S^2}\theta = \frac2\hbar+1$. See \cite{haw3} for a more general result.

The projection $e(\hbar)$ extends the Bott projection $\eb$, so the $K$-theory class $[e(\hbar)]-[1] \in K_0(A_\hbar)$ extends $[\eb]-[1]\in K^0(S^2)$. The usual trace on the matrix algebra $A_\hbar$ gives a homomorphism $[\Tr]:K_0(A_\hbar) \to \Z\subset\R$, and the trace of this particular class is
\[
[\Tr]([e(\hbar)]-[1]) = 1 .
\]
This trace (in general, any finite trace) factors through degree $0$ cyclic homology via the Chern character as
\[
[\Tr] : K_0(A_\hbar) \xrightarrow{\ch_0} HC_0(A_\hbar) \xrightarrow{\Tr} \co . 
\]
The cyclic homology group $HC_0(A_\hbar)$ is just the quotient of $A_\hbar$ by the subspace spanned by commutators. The degree $0$ part of the Chern character of an idempotent is represented by its partial trace:
\begin{definition}
The \emph{partial trace} $\tr : \Mat_m(A_\hbar)\to A_\hbar$ maps a matrix over $A_\hbar$ to the sum of its diagonal entries.
\end{definition}
The partial trace of our idempotent is $\tr e(\hbar) = 1 + \tfrac1{N+1}$, and the Chern character $\ch_0([e(\hbar)]-[1])$ is represented by
\beq
\label{constant.trace}
\tr e(\hbar) -1= \tfrac1{N+1} \in A_\hbar.
\eeq

Although \eqref{constant.trace} is a multiple of the identity, this is not typical. Because the construction of $e(\hbar)$ was $\SU(2)$-equivariant, $\tr e(\hbar)$ is inevitably invariant, and thus a multiple of the identity since $\Hi_N$ is irreducible.

Imagine now that we forget about the $\SU(2)$ action and quantize $S^2$ using a non equivariant \Kahler\ structure. We can still construct $e(\hbar)$ by the same procedure, but now $\tr e(\hbar) -1\in A_\hbar$ will (probably) not be a multiple of the identity. However, for $\hbar$ small enough, $e(\hbar)$ will have have the same class in the $K$-theory $K_0(A_\hbar)$ as the equivariant one (see \cite{haw3}). 
This implies that the (total) trace of $\tr e(\hbar) -1$ is the same. So, applying the normalized trace $\nTr : A_\hbar \to \co, a \mapsto (\Tr a)/(N+1)$ should give
\[
\nTr[\tr e(\hbar)-1] = \tfrac1{N+1}
\]
for $N$ sufficiently large.

Now what can we deduce about the value of $N$ from the \emph{existence} of the trace  on $A_\hbar$ without actually \emph{using} the trace?
The trick is to note that the normalized trace on $\A_\hbar$ is a state. This implies that if a self-adjoint element $a=a^*\in A_\hbar$ is bounded by real numbers as $\alpha\leq a \leq\beta$, then $\nTr a$ satisfies the same bound,
$\alpha \leq \nTr a \leq \beta $.
Applying this to $\tr e$, we see that if $\alpha\leq\tr e(\hbar)-1\leq\beta$ then 
\[
\alpha \leq\tfrac1{N+1} \leq\beta .
\]
So, if $\tr e$ is sufficiently close to a multiple of the identity ($\beta-\alpha\lesssim \frac12\hbar^2$), then there will be a unique value of $N\in\N$ fitting this bound. I will show in the next section (Lem.~\ref{Integer}) that an integer can be extracted from a projection in this way quite generally.

\section{General Results}
\label{General}
Before proceeding further, I will need a simple result about the spectrum of a sum of bounded operators.
\begin{lem}
\label{Spectrum}
If $a,b\in A$ are self-adjoint elements of a unital \cs-algebra, then the spectrum of the sum satisfies
\beq
\label{spectrum.bound}
\Spec(a+b) \subseteq \Spec a + [-\norm{b},\norm{b}]
\eeq
where the sum denotes the set of all sums of elements of the two sets.
\end{lem}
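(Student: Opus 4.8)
The plan is to prove the contrapositive: I will show that any $\lambda$ lying outside the set $\Spec a + [-\norm{b},\norm{b}]$ is not in $\Spec(a+b)$, i.e.\ that $a+b-\lambda$ is invertible. Since $a+b$ is self-adjoint, its spectrum is real, so any non-real $\lambda$ is automatically excluded from $\Spec(a+b)$ and there is nothing to prove. I may therefore assume $\lambda\in\R$.

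For real $\lambda$, the hypothesis $\lambda\notin\Spec a + [-\norm{b},\norm{b}]$ says exactly that $\abs{\lambda-\mu}>\norm{b}$ for every $\mu\in\Spec a$, that is, $\dist(\lambda,\Spec a)>\norm{b}$. In particular $\lambda\notin\Spec a$, so $a-\lambda$ is invertible, and I would factor
\[
a+b-\lambda = (a-\lambda)\bigl[1 + (a-\lambda)^{-1} b\bigr].
\]
The task then reduces to showing $\norm{(a-\lambda)^{-1} b}<1$, for which it suffices (the case $b=0$ being trivial) that $\norm{(a-\lambda)^{-1}}<\norm{b}^{-1}$.

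Here I use self-adjointness in an essential way: $a-\lambda$ is self-adjoint, so $(a-\lambda)^{-1}$ is normal and its norm equals its spectral radius. By the spectral mapping theorem its spectrum is $\{(\mu-\lambda)^{-1}:\mu\in\Spec a\}$, whence
\[
\norm{(a-\lambda)^{-1}} = \sup_{\mu\in\Spec a}\abs{\mu-\lambda}^{-1} = \dist(\lambda,\Spec a)^{-1}.
\]
The hypothesis now gives $\norm{(a-\lambda)^{-1}}\,\norm{b}<1$, so $1+(a-\lambda)^{-1}b$ is invertible by the Neumann series, and hence so is $a+b-\lambda$. This establishes $\lambda\notin\Spec(a+b)$ and completes the argument.

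The only genuine content is the computation $\norm{(a-\lambda)^{-1}}=\dist(\lambda,\Spec a)^{-1}$, which rests on the identity of norm and spectral radius for normal elements of a \cs-algebra together with the spectral mapping theorem; everything else is the standard perturbation-by-Neumann-series estimate. I expect this step to be the main (and essentially only) obstacle, and it is a routine consequence of the continuous functional calculus for the self-adjoint element $a$.
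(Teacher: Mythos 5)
Your proposal is correct and follows essentially the same route as the paper: both identify $\norm{(a-\lambda)^{-1}}=\dist(\lambda,\Spec a)^{-1}$ via the continuous functional calculus for the self-adjoint element $a$, and then invert $a+b-\lambda$ by a Neumann series when $\dist(\lambda,\Spec a)>\norm{b}$. The only cosmetic difference is that the paper writes the geometric series directly for $(a+b-\lambda)^{-1}$ rather than factoring out $a-\lambda$ first.
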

\begin{proof}
If $\lambda\not\in\Spec a$, then $(a-\lambda)^{-1}$ is contained in the commutative \cs-subalgebra generated by $a$ and $1$. That is the algebra $\C(\Spec a)$ of continuous functions on the spectrum. The norm of $(a-\lambda)^{-1}$ is the supremum over $\Spec a$ of the inverse of the distance to $\lambda$. Equivalently, it is the inverse of the infimum of the distance,
\[
\Norm{(a-\lambda)^{-1}} = \left(\dist[\lambda,\Spec a]\right)^{-1} .
\]

Now, suppose that $\norm b < \dist[\lambda,\Spec a]$. This implies that 
\[
\Norm{b\frac1{a-\lambda}} \leq \frac{\norm b}{\dist[\lambda,\Spec a]} <1
\]
and the power series,
\beq
\label{inverse.series}
\frac1{a-\lambda} \sum_{j=0}^\infty \left[-b\frac1{a-\lambda}\right]^j
\eeq
converges to an element of $A$.
Multiplying \eqref{inverse.series} by $a+b-\lambda$ shows that this sum must equal the inverse of $a+b-\lambda$. In particular, that inverse exists and therefore $\lambda\not\in\Spec(a+b)$.

Equivalently, $\lambda\in \Spec(a+b)$ implies $\dist[\lambda,\Spec a] \leq \norm b$. Since $a$ and $b$ are self adjoint, $\Spec a,\Spec(a+b) \subset \R$, and this gives eq.~\eqref{spectrum.bound}.
\end{proof}

The next lemma is my most important tool for proving the main results.
\begin{lem}
\label{Integer}
Let $A$ be any unital \cs-algebra, and $e=e^2=e^*\in\Mat_2(A)$ a projection. If there exist real numbers $\alpha\leq\beta\in\R$ bounding the partial trace
\[
\alpha \leq \tr(e) -1 \leq \beta
\]
then either $\alpha\leq0\leq\beta$ or there exists an integer $k\in\Z$ such that 
\[
\alpha \leq \tfrac1k \leq \beta .
\]
\end{lem}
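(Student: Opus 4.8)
The plan is to write $e=\begin{pmatrix} a & b\\ b^{*} & c\end{pmatrix}$ with $a=a^{*}$, $c=c^{*}$, and read off from $e^{2}=e$ the relations $bb^{*}=a(1-a)$, $b^{*}b=c(1-c)$ and $ab=b(1-c)$. Setting $h:=\tr(e)-1=a+c-1$, I want to show that the convex hull $[\min\Spec h,\max\Spec h]$ always meets $\{0\}\cup\{1/k:k\in\Z\smallsetminus\{0\}\}$; this is equivalent to the statement, since any bounding interval $[\alpha,\beta]$ contains that hull. First I would dispose of the outer regions: $0\le a,c\le 1$ forces $0\le\tr(e)\le 2$, so $\Spec h\subseteq[-1,1]$, killing everything beyond $\pm1$. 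Since replacing $e$ by $1-e$ sends $h\mapsto-h$, I may assume for contradiction that $\Spec h$ lies in a positive gap $(\tfrac1{k+1},\tfrac1k)$, whence $\tr(e)=1+h\ge 1+\alpha'$ with $\alpha':=\min\Spec h>0$.

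The first genuine step is a positivity observation. As $a+c\ge 1+\alpha'$ and $a,c\le1$, every state gives $\langle a\rangle,\langle c\rangle\ge\alpha'$, so $a,c\ge\alpha'>0$ are invertible. The polar decomposition of $b$ then yields a partial isometry $v$ with $av=v(1-c)$, $v^{*}v=\chi_{(0,1)}(c)$, $vv^{*}=\chi_{(0,1)}(a)$, a unitary equivalence between the parts of $a$ and of $1-c$ lying in $(0,1)$; hence $\Spec a\cap(0,1)=(1-\Spec c)\cap(0,1)$. Feeding $c\ge\alpha'$ through $av=v(1-c)$ gives the band structure $\Spec a,\Spec c\subseteq[\alpha',1-\alpha']\cup\{1\}$, with an empty gap $(1-\alpha',1)$.

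Next comes the dichotomy, which I expect to be the clean part. Testing $\tr(e)$ on an approximate eigenvector of $a$ at $\max\Spec a$ and using $\max\Spec a=1-\min\Spec c$ (valid when $1\notin\Spec a$) yields $\max\Spec(\tr e)\ge1$; symmetrically, testing at $\min\Spec a$ gives $\min\Spec(\tr e)\le1$ as soon as $1\notin\Spec c$. Thus, unless $1\in\Spec a$ and $1\in\Spec c$ simultaneously, the value $1$ lies in the convex hull of $\Spec(\tr e)$, i.e. $0\in[\min\Spec h,\max\Spec h]$ — exactly the first alternative. So the contradiction hypothesis forces $1\in\Spec a\cap\Spec c$; because the gap $(1-\alpha',1)$ is empty, $1$ is isolated in $\Spec a$ and hence an honest eigenvalue, with nonzero eigenprojection $P=\chi_{\{1\}}(a)$ satisfying $b^{*}P=0$ and $\alpha'\le PcP\le\beta'$.

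From the top rung $PH$ I would descend a ladder: the relation $ab=b(1-c)=b(a-h)$ makes $b$ lower the spectrum of $a$ by $h$, and since $\Spec h$ is squeezed into $(\tfrac1{k+1},\tfrac1k)$, $h$ is within $\tfrac12(\beta'-\alpha')$ of the scalar $s_{0}=\tfrac12(\alpha'+\beta')$. In the idealized scalar case $h=s_{0}\,1$ this is exact: $b$ lowers $a$ by $s_{0}$, the ladder runs in equal steps from $1$ down to a forced bottom eigenvalue, and the endpoint conditions (where $b$ or $b^{*}$ annihilates, read off from $bb^{*}=a(1-a)$ and $b^{*}b=c(1-c)$) force $s_{0}=1/p$ for the integer number $p$ of rungs, so $1/p\in\Spec h\subseteq[\alpha,\beta]$, contradicting the emptiness of the gap. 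The main obstacle is precisely the passage from this scalar ladder to the genuine non-scalar $h$: the shift per rung is only approximately constant, and a naive iteration lets the error grow linearly in the number of rungs — comparable to the gap width, hence too coarse to pin $p$. Overcoming this, either by upgrading the band structure to an exact equally spaced ladder of isolated eigenvalues (exploiting that $\Spec(\tr e)$ occupies a short interval to rigidify the spacing) or by counting spectral bands directly to produce the integer, is the crux where I would concentrate the effort; Lemma~\ref{Spectrum} is the natural tool for controlling how the spectra of $a$, $c$ and $a+c$ propagate at each rung.
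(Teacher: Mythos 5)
You have in effect reassembled the two spectral facts that drive the paper's own proof: writing $e=\begin{pmatrix}Z&X\\X^*&1-W\end{pmatrix}$ (so your $a=Z$, $c=1-W$, $h=Z-W$), these are $\Spec Z\cup\{0,1\}=\Spec W\cup\{0,1\}$ (your unitary equivalence between the parts of $a$ and $1-c$ in $(0,1)$) and $\Spec Z\subseteq\Spec W+[\alpha,\beta]$ (Lemma~\ref{Spectrum} applied to $Z=W+(\tr e-1)$), together with the observation that $1\in\Spec Z$ when $\alpha>0$. But the argument is not complete, and the ``crux'' you defer is a misdiagnosis rather than a hard step. You do not need to pin down the number of rungs $p$ or show the step size equals $1/p$: the desired conclusion $\alpha\le 1/k\le\beta$ is literally the statement $1\in[k\alpha,k\beta]$ for some $k$, and the intervals $[k\alpha,k\beta]$ are \emph{permitted} to widen linearly in $k$ --- the linear error accumulation you worry about is exactly the slack the conclusion allows. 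Concretely, the two facts combine (for $0<\alpha$) into
\[
\Spec Z\subseteq(\{0,1\}\cup\Spec Z)+[\alpha,\beta]=[\alpha,\beta]\cup(\Spec Z+[\alpha,\beta]),
\]
the translate of $\{1\}$ dropping out because $1+\alpha>1\ge\sup\Spec Z$; iterating until $m\alpha>1$ gives $\Spec Z\subseteq\bigcup_{k\ge1}[k\alpha,k\beta]$, and $1\in\Spec Z$ finishes the proof. No exact spacing, isolated eigenvalues, or band count is needed --- and your proposed repairs would be problematic anyway, since in a general \cs-algebra $\Spec a$ need not consist of isolated eigenvalues, so there is no well-defined integer ``number of rungs'' to rigidify.

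Two secondary points. Your appeal to the polar decomposition of $b$ and to the spectral projections $\chi_{(0,1)}(a)$, $\chi_{(0,1)}(c)$ is not legitimate inside $A$ itself (these exist only in a von Neumann completion, and $P=\chi_{\{1\}}(a)$ requires $1$ to be isolated, which you only get under the contradiction hypothesis); the paper obtains the same spectral identity entirely within $A$ by the resolvent computation $X^*(Z-\lambda)^{-1}X=W(1-W)(W-\lambda)^{-1}$, which shows $\Spec W\subseteq\{0,1\}\cup\Spec Z$ directly. Also, your dichotomy producing $0$ in the convex hull of $\Spec h$ unless $1\in\Spec a\cap\Spec c$ is correct but stronger than necessary: only $1\in\Spec Z$ is used, and it follows immediately from $\sup\Spec Z-\sup\Spec W\ge\alpha>0$ combined with the equality of the two spectra away from $\{0,1\}$.
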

\begin{proof}
If $\alpha\leq0\leq\beta$, then the claim is true. It is sufficient to assume  that $0<\alpha,\beta$, because we could equivalently consider the complementary projection $1-e$ instead.

If we denote the entries of this matrix as
\[
e=
\begin{pmatrix}
Z & X \\
X^* & 1-W
\end{pmatrix}
\]
then the properties $e=e^*$ and $e^2=e$ give relations among $X,Z,W\in A$,
\begin{subequations}
\label{XZW}
\begin{gather}
Z=Z^*,  \quad W=W^*, \label{self-adjoint}\\
ZX =XW, \label{commutation}\\
Z(1-Z) = X X^*, \quad W(1-W) = X^* X . \label{one}
\end{gather}
\end{subequations}

The relations \eqref{XZW} have several implications for the spectra of $Z$ and $W$.

First, equations \eqref{self-adjoint} and \eqref{one} imply that 
\[
\Spec Z, \Spec W \subset [0,1] .
\]

Second, by \eqref{one} and \eqref{commutation}, if $\lambda\not\in\Spec Z$ then
\begin{align*}
W(1-W) &= X^*X = X^*(Z-\lambda)^{-1}(Z-\lambda)X = X^*(Z-\lambda)^{-1}X(W-\lambda) .
\end{align*}
So,
\[
X^*(Z-\lambda)^{-1}X = W(1-W)(W-\lambda)^{-1}
\]
is bounded and either $\lambda=0$, $\lambda=1$, or $\lambda\not\in\Spec W$. This shows that $\Spec W \subseteq \{0,1\}\cup \Spec Z$, and by a symmetrical argument,
\beq
\label{spec1}
\Spec Z \cup \{0,1\} = \Spec W \cup \{0,1\} .
\eeq

The third relation comes from the bounds on $\tr(e) -1 = Z-W$. By Lemma~\ref{Spectrum},
\beq
\label{spec2}
\Spec Z \subset \Spec W + [\alpha,\beta] .
\eeq

Relation \eqref{spec2} implies that $\sup(\Spec Z) - \sup(\Spec W) \geq \alpha > 0$.
If $1\not\in \Spec Z$, then eq.~\eqref{spec1} implies that $\sup(\Spec W) = \sup(\Spec Z)$, therefore by contradiction $1 \in \Spec Z$. 

Combining relations \eqref{spec1} and \eqref{spec2} gives
\begin{align*}
\Spec Z &\subset (\{0,1\}\cup\Spec Z) + [\alpha,\beta] \\
&\;\; = [\alpha,\beta] \cup (\Spec Z + [\alpha,\beta])
\end{align*}
since $1+\alpha >1$. Iterating this gives
\[
\Spec Z \subset \bigcup_{k=1}^{\infty} [k\alpha,k\beta] .
\]

Now, $1\in\Spec Z$ implies that there exists $k\in\N$ such that $k\alpha\leq 1 \leq k\beta$.
\end{proof}
There is no way to avoid requiring that $A$ is unital in this theorem. Without a unit, it would be meaningless to compare an element of $A$ with real numbers.

Any idempotent matrix of functions can be extended to an idempotent over a formal deformation quantization. This lemma is the analogous result for strict deformation quantization.  
\begin{lem}
\label{Holomorphic}
If $(I,A,Q)$ is an order $n$ strict deformation quantization, and $a=a^*\in \Gn(I,A)$ such that $a(0)=a^2(0)$ is a projection, then there exists a neighborhood $I'\subseteq I$ of $0$ (in the relative topology of $I$) such that there is a projection $e=e^*=e^2 \in \Gn(I',A)$ defined by 
\beq
\label{e.integral}
e := \frac1{2\pi i}\oint \frac{d\lambda}{\lambda - a\rvert_{I'}}
\eeq
where the contour encloses the interval $(\frac12,\frac12+\frac1{\sqrt2})$.
\end{lem}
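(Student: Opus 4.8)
\emph{Strategy.} The plan is to build $e$ as a Riesz spectral projection for the self-adjoint family $a$, and then to show that this projection inherits an order $n$ expansion, i.e.\ lies in $\Gn$. Write $a_0:=a(0)$. Since $a=a^*$ and $Q$ is of order $n$, $\Gn(I,A)$ is a subalgebra, so $a^2\in\Gn(I,A)$, the value $a_0$ is a self-adjoint idempotent (a projection in $A_0$), and $a-a^2$ is a continuous section vanishing at $\hbar=0$. Hence $\delta(\hbar):=\Norm{a(\hbar)-a(\hbar)^2}$ is a continuous function with $\delta(0)=0$, and I would take $I'$ to be a neighborhood of $0$ on which $\delta(\hbar)<\tfrac14$.

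\emph{Geometry of the contour.} For self-adjoint $a(\hbar)$ the spectral mapping theorem applied to $t\mapsto t-t^2$ gives $\abs{\lambda-\lambda^2}\le\delta(\hbar)$ for every $\lambda\in\Spec a(\hbar)$. When $\delta<\tfrac14$ this confines the spectrum to two intervals, one lying below $\tfrac12$ and one contained in $(\tfrac12,\tfrac12+\tfrac1{\sqrt2})$; here $\tfrac12+\tfrac1{\sqrt2}$ is precisely the larger root of $\lambda^2-\lambda=\tfrac14$. Thus the circle having $(\tfrac12,\tfrac12+\tfrac1{\sqrt2})$ as a diameter crosses $\R$ only at the two endpoints, avoids $\Spec a(\hbar)$ altogether, and encloses exactly the part of the spectrum near $1$. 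This is the contour of \eqref{e.integral}. For each fixed $\hbar\in I'$, holomorphic functional calculus then makes $e(\hbar)$ an idempotent; since $a(\hbar)$ is self-adjoint, the enclosed spectral set is real, and the contour is symmetric about $\R$, the projection is self-adjoint, so $e(\hbar)=e(\hbar)^*=e(\hbar)^2$. As the resolvent varies norm-continuously with $\hbar$, integrating over the compact contour shows that $e$ is at least a continuous section.

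\emph{The main step.} The substance of the proof --- and the step I expect to be the main obstacle --- is to promote $e$ from a continuous section to a member of the smaller algebra $\Gn(I',A)$, i.e.\ to show that the Riesz projection is ``$n$-times differentiable at $0$'' in the quantization sense. I would first show that for each $\lambda$ on the contour the resolvent $(\lambda-a)^{-1}$ lies in $\Gn(I',A)$. Its value at $0$ is $(\lambda-a_0)^{-1}=\tfrac1\lambda+\tfrac1{\lambda(\lambda-1)}a_0$, a polynomial in the projection $a_0$ (legitimate since $\Spec a_0\subseteq\{0,1\}$ while $\lambda\notin\{0,1\}$ on the contour) and hence an element of $\A_0$. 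Therefore $(\lambda-a)\,Q[(\lambda-a_0)^{-1}]=1-r$ with $r\in\Gn$ and $r(0)=0$, so $\Norm{r(\hbar)}\to0$; the truncated Neumann series $\sum_{k=0}^{n}r^k$ lies in $\Gn$ and agrees with $(1-r)^{-1}$ modulo $o^n(\hbar)$, because $r(0)=0$ forces $\Norm{r(\hbar)}=\Or^1(\hbar)$ for $n\ge1$ and makes the tail $r^{n+1}(1-r)^{-1}$ of order $o^n(\hbar)$. Hence $(\lambda-a)^{-1}=Q[(\lambda-a_0)^{-1}]\,(1-r)^{-1}\in\Gn(I',A)$, and compactness of the contour lets me shrink $I'$ once so that this holds uniformly in $\lambda$.

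\emph{Integrating.} Finally I would feed this into \eqref{e.integral}. The order $n$ Taylor coefficients of $(\lambda-a)^{-1}$ are finite $\A_0$-linear combinations with scalar coefficients that are rational in $\lambda$ with poles only at $\lambda=0,1$; consequently the contour integral commutes with each $\Qh$ and returns residues at $\lambda=1$, which again lie in $\A_0$. Uniformity of the $o^n(\hbar)$ remainder over the compact contour then gives $e(\hbar)\approx\Qh(e_0+e_1\hbar+\dots+e_n\hbar^n) \mod o^n(\hbar)$ with each $e_j\in\A_0$, where in particular $e_0=\tfrac1{2\pi i}\oint(\lambda-a_0)^{-1}\,d\lambda=a_0$. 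This exhibits $e\in\Gn(I',A)$ and finishes the proof. (Should $A$ fail to be unital, one first passes to its unitization, in which every step above is unchanged.)
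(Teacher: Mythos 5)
Your proof is correct, but it reaches the conclusion by a genuinely different route than the paper. The common part is the choice of $I'$ (where $\Norm{a-a^2}<\tfrac14$), the location of the spectrum, and the self-adjointness of the Riesz idempotent. Where you differ is in promoting $e$ to $\Gn(I',A)$: you show that each resolvent $(\lambda-a)^{-1}$ on the contour lies in $\Gn(I',A)$ by writing $(\lambda-a)\,Q[(\lambda-a_0)^{-1}]=1-r$ with $r\in\Gn$, $r(0)=0$, truncating the Neumann series, and then integrating the resulting expansions over the contour. This obliges you to control the $\lambda$-dependence: the interchange of $\oint$ with $\Qh$ (which you correctly reduce to residues of scalar rational coefficients at $\lambda=1$), and the uniformity in $\lambda$ of the $o^n(\hbar)$ remainders (which does follow from compactness of the contour together with the fact that $r$ is a finite combination of fixed sections with coefficients rational in $\lambda$, but this deserves one more line than you give it). The paper avoids the resolvent entirely: it sets $e_0=a\rvert_{I'}$, $e_{j+1}=p(e_j)$ with $p(x)=2x^2-x^3$, uses $\Abs{p(x)-p^2(x)}\leq\tfrac94\Abs{x-x^2}^2$ to get $\Norm{e_j-e_j^2}=\Or^{2^j}(\hbar)$ and $\Norm{e-e_j}\leq 2\Norm{e_j-e_j^2}$, and observes that each $e_j$, being a polynomial in $a$, lies in $\Gn(I',A)$ automatically because $\Gn$ is an algebra; choosing $2^j\geq n+1$ then gives $e\approx e_j \bmod o^n(\hbar)$. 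The paper's iteration is shorter and involves no uniformity bookkeeping; your approach buys a more explicit description of the Taylor coefficients of $e$ (as residues at $\lambda=1$) and makes explicit the passage to the unitization, which the resolvent formula requires and which the paper leaves implicit.
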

\begin{proof}
The function $\norm{a(\hbar)-a^2(\hbar)}$ is continuous and vanishes at $\hbar=0$, therefore, (by local compactness of $I$) there exists a neighborhood $I'\subseteq I$ of $0$ over which it is bounded by some $c<\frac14$.
The spectrum of $a$ restricted to $I'$ is contained in $(\frac12-\frac1{\sqrt2},\frac12)\cup(\frac12,\frac12+\frac1{\sqrt2})$, so $e\in \Gamma(I',A)$ is well defined by \eqref{e.integral}. Because $a(0)$ is already a projection, $e(0)=a(0)$.

Note that $a-a^2 \in \Gn(I,A)$, so $a-a^2=\Or^1(\hbar)$.
Consider the polynomial $p(x)= 2x^2- x^3$. Define $e_0:=a\lvert_{I'}$ and $e_{j+1} := p(e_j)$. For numbers $x\in\R$ such that $\abs{x-x^2}<\frac14$, this polynomial satisfies 
\[
\Abs{p(x)-p^2(x)} \leq \tfrac94 \Abs{x-x^2}^2 .
\]
So,
\begin{align*}
\Norm{e_j(\hbar)-e_j^2(\hbar)} 
&\leq \tfrac94 \Norm{e_{j-1}(\hbar)-e_{j-1}^2(\hbar)}^2 \\
&\leq (\tfrac32)^{2j} \Norm{e_0(\hbar)-e_0^2(\hbar)}^{2^j} 
= \Or^{2^j}(\hbar) .
\end{align*}
Similarly, 
\[
\Norm{e(\hbar)-e_j(\hbar)} \leq 2 \Norm{e_j(\hbar)-e_j^2(\hbar)} =  \Or^{2^j}(\hbar) .
\]

Because $e_j$ is just a polynomial of $e_0$, $e_j\in\Gn(I',A)$. For $2^j\geq n+1$,
\[
e \approx e_j \mod o^n(\hbar)
\]
and therefore $e\in\Gn(I',A)$.
\end{proof}

There is a fairly natural notion of equivalence of quantizations.
\begin{lem}
\label{Equivalence}
If $(I,A,Q)$ is an order $n$ strict deformation quantization and $Q':\A_0\to\Gn(I,A)$ such that $Q'_0=Q_0$, then $(I,A,Q')$ is also an order $n$ strict deformation quantization, the quantization maps $Q$ and $Q'$ determine equivalent products on $\A_0[\hbar]/\hbar^{n+1}$, and $\Gnp(I,A)=\Gn(I,A)$.
\end{lem}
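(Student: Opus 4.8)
The plan is to distill from $Q'$ a single $\co[\hbar]$-linear ``gauge'' operator $G$ and then read off all three assertions by conjugating the structure of $Q$ by $G$. Since $Q'(a)\in\Gn(I,A)$ for every $a\in\A_0$, Lemma~\ref{Formal} furnishes a unique degree $n$ polynomial with $Q'(a)\approx \Qh\bigl(\sum_{j=0}^n G_j(a)\hbar^j\bigr)\mod o^n(\hbar)$; uniqueness together with linearity of $Q'$ forces each coefficient $G_j:\A_0\to\A_0$ to be linear. Evaluating at $\hbar=0$ and using $Q'_0=Q_0$ gives $G_0=\id$. Assembling $G:=\id+\sum_{j=1}^n G_j\hbar^j$, viewed as a $\co[\hbar]$-linear endomorphism of $\A_0[\hbar]/\hbar^{n+1}$, we obtain $Q'\approx Q\circ G\mod o^n(\hbar)$ after extending $Q$ and $Q'$ $\co[\hbar]$-linearly to sections. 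Because $G_0=\id$, the operator $G$ is invertible, with inverse again of the form $\id+\sum_{j\ge1}(G^{-1})_j\hbar^j$ preserving $\A_0$.

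Next I would establish the order $n$ condition and the equivalence of products at once. Extend $\starn$, which by Proposition~\ref{Expand} is $\A_0[\hbar]$-valued and associative modulo $\hbar^{n+1}$, $\co[\hbar]$-bilinearly, and set
\[
a\starpn b := G^{-1}\bigl[(Ga)\starn(Gb)\bigr]\in\A_0[\hbar]/\hbar^{n+1},
\]
which closes on $\A_0[\hbar]$ since $G^{\pm1}$ preserve $\A_0$. By construction $G$ intertwines $\starpn$ with $\starn$, which is exactly the assertion that the two truncated products are equivalent. For the order $n$ condition, compute $Q'(a)Q'(b)\approx Q(Ga)\,Q(Gb)\mod o^n(\hbar)$ (products of $o^n$-close sections stay $o^n$-close because sections are norm-bounded near $0$), expand $Q(Ga)\,Q(Gb)=\sum_{i,j}\Qh(G_ia)\Qh(G_jb)\hbar^{i+j}$ using the order $n$ property of $Q$ term by term, and discard all contributions of degree $>n$ in $\hbar$; this yields $Q(Ga)\,Q(Gb)\approx Q\bigl((Ga)\starn(Gb)\bigr)=Q\bigl(G(a\starpn b)\bigr)\approx Q'(a\starpn b)\mod o^n(\hbar)$. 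Hence $Q'$ satisfies \eqref{order.n} with structure maps the coefficients of $\starpn$. Comparing first-order terms gives $C'_1-C_1=-i\,\delta G_1$, the Hochschild coboundary of the linear cochain $G_1$, so $[C'_1]=[C_1]\in H^2(\A_0,\A_0)$ and axiom~(3), $C'_1\in\tfrac12\pi$, is preserved; together with $Q'_0=Q_0$ giving the dense ${}^*$-inclusion at $0$ and the ${}^*$-linearity of $Q'$, this shows $(I,A,Q')$ is an order $n$ strict deformation quantization.

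Finally I would prove $\Gnp(I,A)=\Gn(I,A)$. The hypothesis gives $Q'(\A_0)\subseteq\Gn(I,A)$, and $\Gn(I,A)$ is a subalgebra closed under multiplication by $\hbar$ and under $o^n(\hbar)$ perturbations; hence any $\Gnp$-section, being $\approx Q'\bigl(\sum_j a_j\hbar^j\bigr)=\sum_j Q'(a_j)\hbar^j$, lies in $\Gn(I,A)$, so $\Gnp(I,A)\subseteq\Gn(I,A)$. For the reverse inclusion, apply $Q'\approx Q\circ G$ to $G^{-1}a$ to obtain $Q(a)\approx Q'(G^{-1}a)=\sum_j Q'\bigl((G^{-1})_j a\bigr)\hbar^j\mod o^n(\hbar)$, whence $Q(\A_0)\subseteq\Gnp(I,A)$, and the identical closure argument yields $\Gn(I,A)\subseteq\Gnp(I,A)$.

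The main obstacle is the bookkeeping in the second paragraph: one must extend $\starn$ to formal-polynomial arguments and check that the order-$n$ estimate genuinely survives, controlling both the $o^n(\hbar)$ errors propagated through products and the truncation of degrees exceeding $n$, and confirm that passing from $C_1$ to $C'_1$ only adds a coboundary, so that membership in the prescribed class $\tfrac12\pi$ is untouched. Everything else is formal once the invertible gauge operator $G$ with $G_0=\id$ is in hand.
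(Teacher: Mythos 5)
Your proposal is correct and follows essentially the same route as the paper: both define the gauge operator $G$ via Lemma~\ref{Formal}, derive $Q[G(a\starpn b)]\approx Q[G(a)\starn G(b)] \bmod o^n(\hbar)$, conclude equivalence from the invertibility of $G$ (identity modulo $\hbar$), and obtain $\Gnp(I,A)=\Gn(I,A)$ by pushing expansions through $G^{\pm1}$. Your version is if anything slightly more complete, since you explicitly construct $\starpn:=G^{-1}[(G\cdot)\starn(G\cdot)]$ and verify the order-$n$ axiom for $Q'$ rather than taking the product determined by $Q'$ as given, and you spell out that $C'_1-C_1$ is a coboundary.
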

\begin{proof}
Define $G(a)\in\A_0[\hbar]$ to be the degree $n$ polynomial such that
\[
Q'(a) \approx Q[G(a)] \mod o^n(\hbar) .
\]
This exists and is unique by Lemma~\ref{Formal}. Extend this map by $\co[\hbar]$-linearity to $G:\A_0[\hbar]\to\A_0[\hbar]$.

Let $\starn$ and $\starpn$ be the (approximately associative) products determined by $Q$ and $Q'$ in eq.~\eqref{order.n}. By definition $\forall a,b\in\A_0[\hbar]$,
\begin{align*}
Q[G(a\starpn b)] &=
Q'(a\starpn b) \approx Q'(a)Q'(b)  &&\bmod o^n(\hbar)\\
&\approx Q[G(a)]\, Q[G(b)] 
\approx Q[G(a)\starn G(b)] &&\bmod o^n(\hbar) .
\end{align*}
Therefore $G(a\starpn b) \approx G(a)\starn G(b) \mod \hbar^{n+1}$. Since $G$ is the identity modulo $\hbar$, it is invertible and therefore an equivalence between these products on $\A_0[\hbar]/\hbar^{n+1}$.

A section $a\in\Gn(I,A)$ has an expansion \eqref{expansion} in terms of  $Q$ (by definition). However, the inverse of $G$ makes that into an expansion in terms of $Q'$. Therefore, $\Gnp(I,A)=\Gn(I,A)$.

This shows in particular that these maps give cohomologous Hochschild cocycles $C_1$ and $C'_1$. Thus $Q'$ is a quantization for the same (noncommutative) Poisson structure on $\A_0$.
\end{proof} 

Lemma~\ref{Integer} only applies to unital \cs-algebras, so it is not directly applicable to an arbitrary strict deformation quantization of $S^2$. Fortunately, an arbitrary strict deformation quantization (of a unital algebra) can always be made unital. 
\begin{lem}
\label{Unital}
If $(I,A,Q)$ is an order $n$ strict deformation quantization of a unital algebra $\A_0\subset A_0$, then there exists an order $n$ \emph{unital} strict deformation quantization $(I',A',Q')$ of $\A_0$, where $I'\subseteq I$ is a neighborhood of $0$, $A'\subseteq A\rvert_{I'}$ is a continuous subfield of \cs-algebras, and
\[
Q' : \A_0 \to \Gn(I',A) \cap \Gamma(I',A') \mbox.
\]

\end{lem}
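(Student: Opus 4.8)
The plan is to manufacture a unit out of the section $Q(1)$. Since $\A_0$ is unital and dense in $A_0$, the element $Q_0(1)=1$ is the unit of $A_0$, and the section $u:=Q(1)\in\Gn(I,A)$ is self-adjoint (because $Q$ is ${}^*$-linear and $1^*=1$) with $u(0)=1=u(0)^2$. Applying Lemma~\ref{Holomorphic} to $u$ produces a neighborhood $I'\subseteq I$ of $0$ and a projection $p=p^*=p^2\in\Gn(I',A)$ with $p(0)=u(0)=1$; shrinking $I'$ if necessary, I may assume it is locally compact with $0$ still an accumulation point. This $p$ is the prospective unit.

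Next I would build the subfield as the corner cut out by $p$. Define the fibres $A'_\hbar:=p(\hbar)A_\hbar p(\hbar)$, each a \cs-algebra with unit $p(\hbar)$, and take as sections $\Gamma(I',A'):=p\,\Gamma(I',A)\,p=\{b\in\Gamma(I',A):pbp=b\}$. The verification that $(A',\Gamma(I',A'))$ is a continuous field---and in fact a continuous subfield of $A\rvert_{I'}$---is the technical heart of the argument, and the step I expect to give the most trouble. One checks the Dixmier axioms: $\Gamma(I',A')$ is a ${}^*$-subalgebra of $\Gamma(I',A)$; for each $\hbar$ the evaluations fill out all of $A'_\hbar$ (any $p_\hbar y p_\hbar$ is $c(\hbar)$ for $c=pbp$ with $b(\hbar)=y$); the norm function $\hbar\mapsto\Norm{b(\hbar)}$ is continuous, being inherited from the ambient field; and if a section $b$ of $\prod_\hbar A'_\hbar$ is a local uniform limit of elements of $\Gamma(I',A')$, then $b\in\Gamma(I',A)$ by ambient completeness and $pbp=b$ because each fibre $A'_\hbar$ is closed, so $b\in\Gamma(I',A')$. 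Since $p(0)=1$ we have $A'_0=A_0$, so $\A_0\subseteq A'_0$ is still a dense ${}^*$-subalgebra.

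It remains to define the quantization map so that it is unital and lands in the corner. Fix a linear functional $\phi:\A_0\to\co$ with $\phi(1)=1$ and $\phi(a^*)=\overline{\phi(a)}$ (any functional with $\phi(1)=1$ may be symmetrized to achieve the second condition), and set
\[
Q'(a):=p\,Q(a)\,p+\phi(a)\bigl(p-p\,Q(1)\,p\bigr)\mbox.
\]
Because $p$, $Q(1)$ and every $Q(a)$ lie in $\Gn(I',A)$, which is a subalgebra as $Q$ is of order $n$, each $Q'(a)$ lies in $\Gn(I',A)$; and each term lies in $pA_\hbar p$, so $Q'(a)\in\Gamma(I',A')$. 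A direct check using $p=p^*$ and $Q(1)^*=Q(1)$ shows $Q'$ is ${}^*$-linear, while $\phi(1)=1$ gives the unitality $Q'(1)=p$. Evaluating at $\hbar=0$, where $p(0)=1$ and $Q_0(1)=1$, the correction term drops out and $Q'_0=Q_0$.

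Finally I would invoke Lemma~\ref{Equivalence}. Since $Q':\A_0\to\Gn(I',A)$ satisfies $Q'_0=Q_0$, that lemma (applied over $I'$) shows that $(I',A\rvert_{I'},Q')$ is again an order $n$ strict deformation quantization whose first-order cocycle is cohomologous to that of $Q$, hence represents the same class $\tfrac12\pi$. The order $n$ condition and the identification of the Poisson structure are relations among the elements $Q'_\hbar(a)$, which hold verbatim inside the subalgebras $A'_\hbar\subseteq A_\hbar$ since the operations and norms agree; together with the continuous subfield structure, the dense inclusion $\A_0\into A'_0$, and the unitality $Q'(1)=p$, this exhibits $(I',A',Q')$ as the desired order $n$ \emph{unital} strict deformation quantization of $\A_0$.
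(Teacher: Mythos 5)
Your proof is correct, and its skeleton matches the paper's: apply Lemma~\ref{Holomorphic} to $Q(1)$ to produce a projection $e\in\Gn(I',A)$ with $e(0)=1$, and pass to the corner subfield $A'_\hbar=eA_\hbar e$ with unit $e(\hbar)$. Where you genuinely diverge is the unitalization of the quantization map. The paper sets $V:=e\,Q(1)\,e$, notes $V\approx 1 \mod \Or^1(\hbar)$ so that $V$ is invertible and positive near $0$, and conjugates, $Q'(a):=V^{-1/2}Q(a)V^{-1/2}$; unitality $Q'(1)=V^{-1/2}VV^{-1/2}=e$ then comes for free and no auxiliary choices are made. You instead compress to the corner and add the scalar correction $\phi(a)\bigl(p-pQ(1)p\bigr)$ for a ${}^*$-symmetric functional $\phi$ with $\phi(1)=1$. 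Both constructions land in $\Gn(I',A)\cap\Gamma(I',A')$, agree with $Q_0$ at $\hbar=0$, and are then finished off by Lemma~\ref{Equivalence} (which the paper invokes only implicitly). Your version avoids having to realize $V^{-1/2}$ as an element of $\Gn(I',A')$ --- a small but real technical point that the paper handles with a truncated binomial series --- at the cost of a non-canonical choice of $\phi$; the paper's conjugation is choice-free and would also preserve positivity of the map if one cared about that. Your explicit verification of the continuous-field axioms for the corner, which the paper merely asserts, is a reasonable expansion rather than a deviation.
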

\begin{proof}
$1\in \A_0$ is a projection, so by Lemma~\ref{Holomorphic}, we can correct $Q(1)$ to a projection $e\in \Gn(I',A)$.
Define the algebras $A'_\hbar := eA_\hbar e$ for $\hbar\in I'$. These form a continuous subfield, because $e$ is a continuous projection. The projection $e(\hbar)$ becomes the unit of $A'_\hbar$.

Define $V := e\, Q(1) e \in \Gn(I',A')$. By construction, $V \approx 1 \mod \Or^1(\hbar)$. So,
\begin{align*}
V^{-1/2} &\approx \sum_{j=0}^n \binom{-\frac{1}2}{ j} (V-1)^j \mod \Or^{n+1}(\hbar)\\
&\;\in \Gn(I',A') .
\end{align*}

Define $Q'(a) := V^{-1/2} Q(a) V^{-1/2}$. This satisfies $Q'(a) = e\, Q'(a) = Q'(a)e$, so in fact $Q(a)\in\Gamma(I',A')$. On the unit, this gives $Q'(1) = V^{-1/2} Q(1) V^{-1/2} = e$, so $Q_\hbar:\A_0\to A'_\hbar$ is unital. 

For any $a\in\A_0$, $Q'(a)\in\Gn(I',A)$ because $V\in \Gn(I',A)$. 
\end{proof}

\section{The Obstruction}
\label{Obstruction}
Now consider some strict deformation quantization of $S^2$. As always, the dense subalgebra $\A_0\subset \C(S^2)$ must be closed under the Poisson bracket. I will also need to assume that it is closed under solution of the Poisson equation. In order to use the Bott projection, I will also require that the Cartesian coordinates of $\R^3\supset S^2$ restrict to functions in $\A_0$. 

The most natural choices of algebra $\A_0\subset \C(S^2)$ would be the algebra of smooth functions $\C^\infty(S^2)$ or the algebra of polynomials (in the coordinate functions $x^i\in \C^\infty(S^2)$). Both of these algebras satisfy all of these conditions.

In order to apply Lemma~\ref{Integer}, we need a projection whose partial trace is sufficiently close to a multiple of the identity. That is the purpose of the next lemma.  Let $\Delta$ denote the Laplacian.

\begin{lem}
\label{Squash}
If $(I,A,Q)$ is an order $n\geq2$, unital, strict deformation quantization of $S^2$,  with $x^i\in\A_0$ and 
\[
\Delta f \in \A_0 \implies f \in \A_0 \mbox,
\]
then there exists a projection $e \in \Gn[I',\Mat_2(A)]$ defined over a neighborhood $I'\subseteq I$ of $0$, such that $e(0)=\eb$ and
\[
\tr e \approx 1 + \thn^{-1} \mod o^n(\hbar) \mbox,
\]
where $\thn  \in \frac2\hbar+\co[\hbar]$ is a degree $n-2$ Laurent polynomial.
\end{lem}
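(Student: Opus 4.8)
The plan is to quantize the Bott projection, correct it to a genuine projection via Lemma~\ref{Holomorphic}, and then conjugate it order by order in $\hbar$ so as to drive its partial trace to a scalar; that scalar is then inverted to produce $\thn$.

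First I would set $X^i := \Qh(x^i)$ and form $a := Q(\eb) = \tfrac12(1+\sigma_i X^i) \in \Gn[I,\Mat_2(A)]$. Since the $x^i$ are real and $Q$ is ${}^*$-linear, $a$ is self-adjoint, and $a(0)=\eb=\eb^2$. Applying Lemma~\ref{Holomorphic} in the continuous field $\Mat_2(A)$ yields a projection $e\in\Gn[I',\Mat_2(A)]$ over a neighborhood $I'$ of $0$ with $e(0)=\eb$. Its partial trace lies in $\Gn(I',A)$, so by Lemma~\ref{Formal} there are functions $g_0,\dots,g_n\in\A_0$ with $\tr e \approx \Qh(g_0+g_1\hbar+\dots+g_n\hbar^n)\mod o^n(\hbar)$. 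A leading-order computation --- using $\tr\sigma_k=0$, $\tr(\sigma_j\sigma_k)=2\delta_{jk}$, the relation $[X^i,X^j]\approx i\hbar\,\Qh(\{x^i,x^j\})$ from \eqref{Dirac}, and $\sum_i(x^i)^2=1$ --- forces $g_0=1$ and the constant $g_1=\tfrac12$. The higher $g_j$ are a priori arbitrary functions on $S^2$.

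The heart of the argument is an inductive straightening: for $m=2,\dots,n$ I remove the non-constant part of $g_m$ by conjugating $e$ by a unitary $u=\exp(i\hbar^{m-1}\sigma_j w_j)$ with $w_j\in\A_0$ real, to be chosen. Because the exponent is $\Or^{m-1}(\hbar)$ with $m\geq2$, we have $u(0)=1$, so $e(0)=\eb$, membership in $\Gn$, and the projection property all persist. Expanding $\tr(ueu^{-1})=\tr e + i\hbar^{m-1}\tr[\sigma_j w_j,e]+\dots$ and using that the partial trace of a commutator over the \emph{commutative} algebra $A_0$ vanishes (cyclicity of $\tr$ at $\hbar=0$), I find that the $\hbar^0$ parts of $\tr[\sigma_j w_j,e]$ and of every higher commutator term are zero; hence the conjugation perturbs $\tr e$ only at orders $\geq\hbar^m$. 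The crux is that the order-$\hbar^m$ change equals $-\hbar^m\sum_i\{w_i,x^i\}$, extracted from the first-order term of the star-commutator together with the Pauli identities. Choosing $w_i:=\{x^i,f\}$ for $f\in\A_0$ converts this into a Laplacian,
\[
\sum_i\{w_i,x^i\}=-\sum_i\{x^i,\{x^i,f\}\}=\Delta f ,
\]
since the Hamiltonian fields of the $x^i$ are the rotation generators whose squares sum to the (negative) Casimir $=\Delta$ on $S^2$. As the mean-zero part of $g_m$ lies in the image of $\Delta$, I can solve $\Delta f = (\text{non-constant part of }g_m)$; the hypothesis $\Delta f\in\A_0\Rightarrow f\in\A_0$ keeps $f\in\A_0$, and closure of $\A_0$ under the Poisson bracket keeps $w_i=\{x^i,f\}\in\A_0$. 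Conjugating by the associated $u$ annihilates the non-constant part of the order-$\hbar^m$ coefficient while leaving all lower orders, and the constant part at order $m$, untouched. After running $m=2,\dots,n$ the partial trace is scalar, $\tr e\approx 1+\sum_{j=1}^{n}c_j\hbar^j$ with $c_j\in\co$ and $c_1=\tfrac12\neq0$; defining $\thn\in\tfrac2\hbar+\co[\hbar]$ to be the degree-$(n-2)$ truncation of the formal inverse of $\sum_{j\geq1}c_j\hbar^j=\tfrac\hbar2(1+\dots)$ gives $\tr e\approx 1+\thn^{-1}\mod o^n(\hbar)$, as claimed.

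The step I expect to be the main obstacle is pinning down the order-$\hbar^m$ effect of the conjugation and its identification with $\Delta f$: this forces one to combine the matrix (Pauli-trace) bookkeeping, the leading term of the star-commutator, and the $\su(2)$ Poisson geometry of $S^2$, while carefully verifying that no lower order is disturbed --- in particular that the quadratic term in the conjugation does not contaminate order $\hbar^2$ when $m=2$, which is rescued precisely by the vanishing of partial traces of commutators at $\hbar=0$.
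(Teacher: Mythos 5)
Your proposal is correct and follows the same overall architecture as the paper --- quantize $\eb$, correct via Lemma~\ref{Holomorphic}, show $\tr e \approx 1+\tfrac12\hbar \bmod \Or^2(\hbar)$, then iteratively absorb the non-constant part of each coefficient by solving a Poisson equation, and finally invert the resulting scalar series to define $\thn$. The one genuinely different ingredient is the correction device. The paper perturbs the grading operator $u=2e-1$ additively by $\delta = \tfrac12[u,[u,Q(f)]]$, which anticommutes with $u$ so that $(u+\delta)^2=1+\delta^2$ can be renormalized explicitly; the Laplacian then emerges from a \emph{double} quantum commutator, $[\hat x_i,[\hat x^i,Q(f)]]\approx -\hbar^2 Q(\{x_i,\{x^i,f\}\})$, contributing two factors of $i\hbar$. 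You instead conjugate $e$ by the unitary $\exp(i\hbar^{m-1}\sigma_j Q(\{x^j,f\}))$, so one Poisson bracket is inserted classically through the choice $w_j=\{x^j,f\}$ and only one factor of $i\hbar$ comes from the quantum commutator $[Q(w_i),Q(x^i)]$; the two mechanisms produce the same $\hbar^m Q(\Delta f)$ shift in $\tr e$. Your route buys automatic preservation of the projection property (no functional-calculus renormalization of $u+\delta$ is needed), at the cost of having to control the higher terms of the adjoint expansion; as you correctly note, those are harmless because the partial trace of any commutator vanishes at $\hbar=0$, so the $k$-th term contributes only at order $\hbar^{k(m-1)+1}\geq\hbar^{m+1}$. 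Two points you assert rather than prove, but which do work out: that $g_1$ is the \emph{constant} $\tfrac12$ requires using the projection identity $e^2=e$ (equivalently $u^2=1$) and the $\su(2)$ relations $\{x^j,x^k\}=\epsilon^{jk}_{\;\;i}x^i$, exactly as in the paper's computation of $\tau$, since $e-Q(\eb)=\Or^1(\hbar)$ could a priori contribute a non-constant function at order $\hbar$; and the estimate $\tr[\sigma_jQ(w_j),\,e-Q(\eb)]=\Or^2(\hbar)$ needs the order-$n\geq2$ hypothesis (a first-order quantization would only give $o^1(\hbar)$ there, which is precisely the failure mode the paper warns about).
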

\begin{proof}
The quantization maps naturally extend to matrices by linearity, giving a quantization of $\Mat_2(\A_0)$. So, Lemma~\ref{Holomorphic} applies and a projection $e$ can be constructed by applying functional calculus to $Q(\eb)$. This satisfies $e(0)=\eb$.

It is a little easier to work with the equivalent grading operator $u := 2e-1$. As any $2\times2$ matrix, it can be decomposed in terms of Pauli matrices in the form,
\[
u = \tau + \sigma_i \hat x^i 
\mbox,
\]
where $\tau,\hat x^i \in \Gn(I',A)$. 
In this case, $e(0)=\eb$ means that $\tau(0)=0$, and $\hat x^i(0)=x^i$. The identity $u^2=1$  implies the commutation relations,
\[
\tau\, \hat x^i + \hat x^i \tau 
= -\tfrac{i}2 \epsilon^{i}_{\;jk}[\hat x^j,\hat x^k] 
\approx \tfrac{\hbar}2 \epsilon^{i}_{\;jk} Q(\{x^j,x^k\}) 
= \hbar Q(x^i) \approx \hbar \hat x^i \mod \Or^2(\hbar) .
\]
Therefore $\tau \approx \frac12\hbar \mod \Or^2(\hbar)$ and 
\[
\tr e = 1 + \tau \approx 1 + \tfrac12\hbar \mod \Or^2(\hbar) .
\]
Since $\tr e\in\Gn(I',A)$, this implies that
\beq
\label{tre1}
\tr e \approx 1 + \tfrac12\hbar  + \hbar^2Q(g) \mod o^2(\hbar)
\eeq
for some $g\in \A_0$.

This shows that $\tr e$ equals a multiple of the identity to order $\Or^2(\hbar)$.  The next step is to improve $u$, so that $\tr e$ will equal a multiple of the identity to order $o^2(\hbar)$.

First, add some self-adjoint $\delta\in\Gn[I',\Mat_2(A)]$ to $u$, and then define $u'$ by ``correcting'' $u+\delta$ with functional calculus so that $u'^2=1$. My choice of $\delta$ is 
\beq
\label{delta}
\delta := Q(f) - u\, Q(f) u = u[u,Q(f)] = \tfrac12 [u,[u,Q(f)]]
\eeq
for some $f\in \A_0$. This vanishes at $\hbar=0$, so $\delta = \Or^1(\hbar)$. 

This $\delta$ was deliberately chosen such that it anticommutes 
with $u$, i.e., $u\delta=-\delta u$. So, $(u+\delta)^2 = 1 + \delta^2$ and we can write $u'$ explicitly as
\begin{align}
u' &= (1+ \delta^2)^{-1/2} (u + \delta) \nonumber\\
&\approx u + \delta - \tfrac12 u \delta^2 \mod \Or^3(\hbar) 
\label{uprime} .
\end{align}

We need to compute $\tr u'$ to second order in $\hbar$. Surprisingly, we only need the first 2 terms of \eqref{uprime}, because the partial trace of the third term is actually of order $\Or^3(\hbar)$. The reason is that $\tau$ is scalar up to first order, so $[\tau,\hat x^i]=\Or^3(\hbar)$. 
Hence, 
\begin{align*}
\tr u' - \tr u &\approx \tr (Q(f)-uQ(f)u)  
= \tfrac12 \tr [u,[u,Q(f)]]  &&\bmod o^2(\hbar) \\
&\;= [\hat x_i,[\hat x^i,Q(f)]] + [\tau,[\tau,Q(f)]] \\
&\approx -\hbar^2 Q(\{x_i,\{x^i,f\}\}) 
\approx \hbar^2 Q(\Delta f) &&\bmod o^2(\hbar) .
\end{align*}
Therefore, this will change the partial trace of the projection to 
\[
\tr e'\approx\tr e  + \frac12\hbar^2Q(\Delta f) \mod o^2(\hbar).
\]

Now we have already seen that $\tr e \approx 1+\frac12\hbar + \hbar^2 Q(g) \mod o^2(\hbar)$. We need to replace the function $g$ with its average value,
\[
c := \frac1{4\pi} \int_{S^2} g\omega .
\]
The function $g-c$ integrates to $0$ and is therefore in the image of the Laplacian. So, if we choose $f\in\A_0$ such that $-2\Delta f = g-c$, then
\[
\tr e' \approx 1+ \tfrac12\hbar + c\hbar^2 
\approx 1 + \left(\tfrac2\hbar-4c\right)^{-1} \mod o^2(\hbar).
\]

This proves the claim up to second order. To correct $u$ from order $n-1$ to order $n$, we can use exactly the same procedure, except with $u+\hbar^{n-2}\delta$.
\end{proof}

It would have been simpler to equivalently state that $\tr e$ is approximated by a polynomial $1+\tfrac12\hbar+\dots$. However, this awkward formulation of Lemma~\ref{Squash} simplifies Theorem~\ref{Disconnect} and the notation $\vartheta$ is motivated by Theorem~\ref{Infinite} below.

\begin{thm}
\label{Disconnect}
If $(I,A,Q)$ is a second order strict deformation quantization of $S^2$ such that $x^i\in \A_0$ and 
\[
\Delta f\in \A_0 \implies f \in \A_0
\]
then there is no connected neighborhood of $0$ in $I$ and in particular $I$ is not connected.
If the quantization is of order $n\geq 2$, then there exists a degree $n-2$ Laurent polynomial $\thn \in \frac2\hbar+\co[\hbar]$  such that $\forall \hbar\neq0\in I$
\beq
\label{approx.int}
\dist\left[\thn(\hbar),\Z\right] = o^{n-2}(\hbar) .
\eeq
This Laurent polynomial is unique modulo adding a constant integer. If the quantization is unital, then $\thn$ is the same as in Lemma~\ref{Squash}.
\end{thm}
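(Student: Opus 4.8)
The plan is to combine the two workhorses, Lemma~\ref{Unital} and Lemma~\ref{Squash}, to manufacture a projection whose partial trace is a near-scalar, and then feed it fiberwise into the integrality Lemma~\ref{Integer}. First I would reduce to the unital situation: since $1\in\A_0$ and $x^i\in\A_0$, Lemma~\ref{Unital} replaces $(I,A,Q)$ by a unital order $n$ strict deformation quantization over a neighborhood $I'\subseteq I$ of $0$, to which Lemma~\ref{Squash} applies. This yields a projection $e\in\Gn[I',\Mat_2(A)]$ with $e(0)=\eb$ and a degree $n-2$ Laurent polynomial $\thn\in\frac2\hbar+\co[\hbar]$ such that
\[
\tr e \approx 1 + \thn^{-1} \mod o^n(\hbar).
\]
Writing $\epsilon(\hbar):=\Norm{\tr e(\hbar)-1-\thn(\hbar)^{-1}}$, we have $\epsilon(\hbar)=o^n(\hbar)$, and the self-adjoint element $\tr e(\hbar)-1$ has spectrum inside $[\thn(\hbar)^{-1}-\epsilon(\hbar),\,\thn(\hbar)^{-1}+\epsilon(\hbar)]$.

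Next I would apply Lemma~\ref{Integer} in each fiber $A_\hbar$ (for small $\hbar\neq0$) with $\alpha=\thn(\hbar)^{-1}-\epsilon(\hbar)$ and $\beta=\thn(\hbar)^{-1}+\epsilon(\hbar)$. Since $\thn(\hbar)^{-1}=\frac\hbar2+\dots$ is positive and of exact order $\hbar$, while $\epsilon(\hbar)=o^n(\hbar)$ with $n\geq2$, we have $\alpha>0$ for small $\hbar$; this excludes the alternative $\alpha\leq0\leq\beta$. Hence for each such $\hbar$ there is an integer $k=k(\hbar)$ with $\Abs{\tfrac1k-\thn(\hbar)^{-1}}\leq\epsilon(\hbar)$, and positivity of $\alpha$ forces $k>0$ with $\tfrac1k\approx\tfrac\hbar2$, so $k\approx\tfrac2\hbar$.

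The heart of the argument — and the step I expect to be most delicate — is converting this estimate on reciprocals into an integrality statement for $\thn(\hbar)$ itself, tracking the loss of two orders in $\hbar$. Multiplying through by $k\,\thn(\hbar)$ gives
\[
\Abs{\thn(\hbar)-k} = \Abs{k\,\thn(\hbar)}\cdot\Abs{\tfrac1k-\thn(\hbar)^{-1}} \leq \Abs{k\,\thn(\hbar)}\,\epsilon(\hbar).
\]
Since $k\approx\frac2\hbar$ and $\thn(\hbar)\approx\frac2\hbar$, the prefactor $\abs{k\,\thn(\hbar)}$ grows like $\hbar^{-2}$, so $\abs{\thn(\hbar)-k}$ is of order $\hbar^{-2}\cdot o^n(\hbar)=o^{n-2}(\hbar)$. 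As $k\in\Z$, this is exactly $\dist[\thn(\hbar),\Z]=o^{n-2}(\hbar)$, which is eq.~\eqref{approx.int}; this also explains why the estimate degrades by two orders and why only $n\geq2$ gives nontrivial information.

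Finally, for the first assertion I take $n=2$, so $\thn(\hbar)=\frac2\hbar+\text{const}$ and \eqref{approx.int} reads $\dist[\thn(\hbar),\Z]=o^0(\hbar)\to0$. If $I$ contained a connected neighborhood of $0$ — necessarily a one-sided interval such as $[0,\delta)$ — then $\thn$ would be continuous and unbounded on it and would pass through half-integers for $\hbar$ arbitrarily close to $0$, contradicting $\dist[\thn,\Z]\to0$; since a connected $I$ would itself be such an interval, $I$ is not connected. For uniqueness, if $\thn$ and $\thn'$ both satisfy \eqref{approx.int} then their difference $p:=\thn-\thn'\in\co[\hbar]$ is a polynomial of degree $\leq n-2$ with $\dist[p(\hbar),\Z]=o^{n-2}(\hbar)$; letting $\hbar\to0$ forces the constant term into $\Z$, and then comparing the lowest surviving monomial $c_j\hbar^j$ with $1\leq j\leq n-2$ against the bound $o^{n-2}(\hbar)$ forces every such $c_j=0$, so $p$ is a constant integer. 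In the unital case no reduction is needed, so the $\thn$ produced is literally the one from Lemma~\ref{Squash}.
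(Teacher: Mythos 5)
Your proposal is correct and follows essentially the same route as the paper: reduce to the unital case (the paper additionally invokes Lemma~\ref{Equivalence} alongside Lemma~\ref{Unital} in this step), apply Lemma~\ref{Squash} to get the projection and $\thn$, feed the partial-trace bounds fiberwise into Lemma~\ref{Integer}, and convert the estimate on reciprocals into $\dist[\thn(\hbar),\Z]=o^{n-2}(\hbar)$ with the same two-order loss; your half-integer/intermediate-value argument for disconnectedness and your polynomial-degree argument for uniqueness are likewise the paper's.
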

\begin{proof}
The first claim will follow from \eqref{approx.int}. Note that \eqref{approx.int} is true if and only if it is true for a neighborhood $I'\subset I$ of $0$, so by Lemmas \ref{Equivalence} and \ref{Unital}, it is sufficient to prove \eqref{approx.int} for a unital quantization.

So, assume the quantization is unital and let $e$ be the projection and $\thn$ the Laurent polynomial from Lemma~\ref{Squash}. Define
\[
E(\hbar) := \Norm{\tr(e) - 1 - \thn^{-1}} .
\]
By Lemma~\ref{Squash}, this is of order $o^n(\hbar)$. For any nonzero $\hbar\in I$, Lemma~\ref{Integer} says that there exists $k\in\Z$ such that
\[
\thn(\hbar)^{-1} - E(\hbar) \leq \frac1k \leq \thn(\hbar)^{-1} + E(\hbar) .
\]
If $\hbar$ is sufficiently small, then $E(\hbar)\abs{\thn(\hbar)}<1$ and
\[
\dist\left[\thn(\hbar),\Z\right] 
\leq \Abs{\thn(\hbar) - k}
\leq \frac{E(\hbar)\thn^2(\hbar)}{1- E(\hbar)\abs{\thn(\hbar)}}
= o^{n-2}(\hbar) .
\]

Suppose that $\thn$ and $\thn'$ both satisfy eq.~\eqref{approx.int}. Then $\thn-\thn'\in\co[\hbar]$ and
\[
\dist[\thn(\hbar)-\thn'(\hbar),\Z] = o^{n-2}(\hbar) .
\]
Taking the limit $\hbar\to0$ shows that $m:=\thn(0)-\thn'(0) \in \Z$. So, 
\[
\thn(\hbar)-\thn'(\hbar)-m = \Or^1(\hbar)
\] 
is smaller than $\frac12$ for $\hbar$ sufficiently small, and thus
\[
\thn(\hbar)-\thn'(\hbar)-m = o^{n-2}(\hbar) .
\]
Because $\thn-\thn'$ is of degree $n-2$, this proves that $\thn(\hbar)=\thn'(\hbar)+m$.

For $n=2$, we can write $\thn = \frac2\hbar + c$. So, for all sufficiently large integers $k\in\Z$,
\[
2\left(k - c + \tfrac12\right)^{-1}\not\in I .
\]
Since $0\in I$ is an accumulation point, this shows that $I$ --- or any neighborhood of $0$ in $I$ --- cannot be connected.
\end{proof}

If the quantization was only first order, then we would still have that
\[
\tr e \approx 1 + \tfrac12\hbar \mod o^1(\hbar)
\]
but this would not restrict the values of $\hbar$ at all. Even if the error is of order $\Or^2(\hbar)$, this would be inadequate. In that case, $2/\hbar$ is restricted to a sequence of intervals around integers, but the width of the intervals is only bounded (rather than convergent to $0$), so the intervals may overlap.


This final theorem connects Theorem~\ref{Disconnect} with Fedosov's integrality condition.
\begin{thm}
\label{Infinite}
If $(I,A,Q)$ is an infinite order, strict deformation quantization of $\C^\infty(S^2)$ and $\theta \in \hbar^{-1} H^2(S^2)[[\hbar]]$ is the characteristic class of the corresponding formal deformation quantization, then for any $n\geq 2\in\N$ the Laurent polynomial $\thn$ in Lemma~\ref{Squash} and Theorem~\ref{Disconnect} is the truncation of
\[
\vartheta:= \int_{S^2}\theta .
\]
In other words, $\vartheta(\hbar)$ is the asymptotic expansion of some map $I\smallsetminus\{0\}\to\Z$.
\end{thm}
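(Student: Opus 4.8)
The plan is to identify the Laurent polynomial $\thn$ produced by Lemma~\ref{Squash} with the truncation of $\vartheta = \int_{S^2}\theta$ by relating both to the trace of the projection $e$ extending the Bott class. The key observation is that $\thn^{-1}$ is, by construction, the degree-$n$ expansion of the normalized partial trace $\tr(e) - 1$, which is the degree-$0$ Chern character of the $K$-theory class $[e]-[1]$. On the formal side, equation~\eqref{TrBott} together with \eqref{Tr1} tells us that the full formal trace of an idempotent extending $\eb$ is $\Tr_\hbar e_\hbar - \Tr_\hbar 1 = \int_{S^2}c_1(\eb) = 1$, while $\Tr_\hbar 1 = \int_{S^2}\theta = \vartheta$. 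The strategy is therefore to show that the strict quantization's $\tr e$, which detects $1/\vartheta$ via the trace being a state, matches the formal quantization's idempotent trace order by order in $\hbar$.

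First I would invoke Proposition~\ref{Expand}: since the quantization is of infinite order, the operations $\starn$ stabilize to a genuine formal deformation quantization of $\C^\infty(S^2)$, which by hypothesis has characteristic class $\theta$. Second, the projection $e\in\Gn[I',\Mat_2(A)]$ from Lemma~\ref{Squash} has, via the expansion~\eqref{expansion}, a well-defined formal image: the polynomial coefficients $a_0,\dots,a_n$ assemble (as $n\to\infty$) into an idempotent $e_\hbar\in\Mat_2(\AA^\hbar)$ with $e_\hbar\equiv\eb\bmod\hbar$. This is precisely an idempotent of the type appearing in the algebraic index theorem~\eqref{index}. Third, I would argue that the formal partial trace $\tr e_\hbar\in\AA^\hbar$ reduces, after applying the unique normalized formal trace $\Tr_\hbar$, to $\Tr_\hbar e_\hbar$; by the index computation~\eqref{TrBott} this equals $1+\vartheta$, whereas $\Tr_\hbar 1 = \vartheta$. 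The content is that the \emph{function} $\tr e(\hbar)-1$, whose truncated asymptotic expansion Lemma~\ref{Squash} names $\thn^{-1}$, corresponds under the normalized trace to $1/\vartheta$, so that the asymptotic expansion of $\thn^{-1}$ must agree with the formal series $1/\vartheta$ to each order. Inverting, $\thn$ agrees with the truncation of $\vartheta$ modulo $o^{n-2}(\hbar)$; by the uniqueness clause of Theorem~\ref{Disconnect} (unique modulo an integer constant), and since both $\thn$ and the truncation of $\vartheta$ lie in $\frac2\hbar+\co[\hbar]$ with matching leading term $\frac2\hbar$, they coincide.

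The main obstacle I expect is the third step: making rigorous the claim that the normalized partial trace of the strict projection $e$ reproduces the \emph{formal} trace $\Tr_\hbar$ of the corresponding formal idempotent. The subtlety is that a strict deformation quantization need not carry any trace at all --- indeed, the whole point of the paper is to avoid assuming one. What saves the argument is that we are not evaluating a trace on $A_\hbar$; rather, Lemma~\ref{Squash} already isolates $\tr e$ as an element whose expansion is a \emph{scalar} Laurent polynomial $1+\thn^{-1}$ times the identity (up to $o^n(\hbar)$). Since this element is canonically a multiple of the unit, its coefficients are honest numbers, and these numbers are computed purely algebraically from the formal product $\starn$ --- the same data that determines $\theta$. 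Thus the identification $\thn^{-1}=1/\vartheta$ modulo $o^{n-2}(\hbar)$ is a statement about the formal deformation quantization alone, to which the algebraic index theorem applies directly, with no appeal to an operator trace on the strict algebras. Verifying that the formal idempotent extracted from $e$ genuinely has Chern class $[\eb]$ in $K^0(S^2)$, so that~\eqref{TrBott} is applicable, is the one place requiring care, but it follows because $e(0)=\eb$ fixes the class in $\hbar$-degree zero.
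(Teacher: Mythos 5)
Your proposal is correct in outline but takes a genuinely different route from the paper. You take the projection $e$ already produced by Lemma~\ref{Squash}, pass to its Taylor expansion at $\hbar=0$ (which by Lemma~\ref{Formal} and Proposition~\ref{Expand} is an idempotent for the formal product modulo $\hbar^{n+1}$, liftable to a genuine formal idempotent extending $\eb$), and then exploit the fact that its formal partial trace is the \emph{scalar} $1+\thn^{-1}$: applying the $\co[[\hbar]]$-linear Fedosov trace and the algebraic index theorem gives $(1+\thn^{-1})\vartheta = 1+\vartheta$ up to controlled error, whence $\thn$ is the truncation of $\vartheta$. The paper works in the opposite direction: it uses the one-dimensionality of $H^2(S^2)$ to find an equivalence $G$ between the formal product determined by $Q$ and an $\SU(2)$-equivariant one, writes down an explicit equivariant formal projection $e'_\hbar=\eb+\tfrac14\eta$ with scalar partial trace $1+\tfrac12\eta$, identifies $\eta=2/\vartheta$ by the index theorem, and transports this projection back through $Q\circ G^{(n)}$ --- thereby independently reproving Lemma~\ref{Squash} rather than invoking it. Your route is shorter and avoids the classification input for formal deformation quantizations, at the cost of depending on Lemma~\ref{Squash} (harmless here, since $\A_0=\C^\infty(S^2)$ satisfies its hypotheses); the paper's route buys an explicit equivariant model and a second, independent derivation of the scalar form of $\tr e$. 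One small wobble at the end: matching leading terms $\frac2\hbar$ does not by itself remove the integer-constant ambiguity in Theorem~\ref{Disconnect}'s uniqueness clause, but you do not need that clause --- your index computation already shows that $\thn-\vartheta_{(n-2)}$ is a Laurent polynomial of degree $n-2$ which is $o^{n-2}(\hbar)$, hence identically zero.
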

\begin{proof}
Let $*$ be the formal deformation quantization product obtained by asymptotically expanding $Q(f)Q(g)$.
Because $H^2(S^2)$ is $1$-dimensional, any formal deformation quantization of $S^2$ is formally equivalent to an $\SU(2)$-equivariant formal deformation quantization. Let $*'$ be such an equivalent equivariant formal product, and let $G : \C^\infty(S^2)[[\hbar]] \to \C^\infty(S^2)[[\hbar]]$ be the equivalence; that is,
\[
G(f * g) = G(f) *' G(g) .
\]

By equivariance, there exist $R=1+\dots$ and $\eta = \hbar + \dots\in\co[[\hbar]]$ such that
\[
x^i *' x^i = R^2
\]
and
\[
[x^i,x^j]_{*'} = i \eta \epsilon^{ij}_{\;\;k} x^k .
\]
However, by a minor rescaling, we can always choose $*'$ such that $R^2 = 1 - \frac14\eta^2$. 

Setting $u'_0 = \sigma_i x^i = 2\eb-1$, we can compute $u'_0 *' u'_0 = 1 + \frac14\eta^2 + \eta u'_0$. So, $(u'_0+\frac12\eta)*'(u'_0+\frac12\eta) =1$ and
\[
e'_\hbar := \eb + \tfrac14\eta
\]
is a projection for the product $*'$. This has partial trace $\tr e'_\hbar = 1 + \frac12\eta$.

Since $\tr e'_\hbar\in\co[[\hbar]]$, the algebraic index theorem shows that
\begin{align*}
\Tr_\hbar e'_\hbar &= \int_{S^2} e^{\theta} \wedge \ch \eb = 1 + \int_{S^2}\theta \\
&= (1+\tfrac12\eta) \Tr_\hbar 1 = (1+\tfrac12\eta)\int_{S^2}\theta .
\end{align*}
Therefore, $\int_{S_2}\theta = 2\eta^{-1} = 2\hbar^{-1} + \dots$.

Now let $G^{(n)}$ be the truncation of $G$ up to order $\hbar^n$. The idea is that $Q\circ G^{(n)}$ is an approximately equivariant quantization map. Define,
\[
e_0 := Q[G^{(n)}(\eb)] 
\]
and observe that
\[
e_0^2 \approx Q[G^{(n)}(\eb*'\eb)] \mod o^n(\hbar) .
\]
So if we correct $e_0$ to a projection $e$ by functional calculus, then it is approximately
\[
e \approx  e_0 + \tfrac14\eta \mod o^n(\hbar) . 
\]
This has partial trace
\begin{align*}
\tr e &= 1 + \tfrac12\eta 
\approx 1+\left(\int_{S^2}\theta_{(n-2)}\right)^{-1} \mod o^n(\hbar) ,
\end{align*}
where $\theta_{(n-2)}$ is the truncation of $\theta$ at degree $\hbar^{n-2}$.
\end{proof}

Note that this reproves Lemma~\ref{Squash}, but with different hypotheses. Lemma~\ref{Squash} applies to any order $\geq2$ quantization with minor restrictions on $\A_0$. Theorem~\ref{Infinite} requires an infinite order quantization with $\A_0=\C^\infty(S^2)$, although the proof could probably be extended to finite orders by generalizing the classification results for formal deformation quantization in the literature. I have retained Lemma~\ref{Squash} because the hypotheses are slightly more general and the proof is explicit and constructive.

\section{Remarks}
Theorem \ref{Disconnect} shows that $I\smallsetminus\{0\}$ is very close to being discrete. It is contained in a union of smaller and smaller intervals, and for an infinite order quantization, these intervals shrink faster than any power of $\hbar$.

This seems to be the strongest possible result. Consider the following variant of the Berezin-Toeplitz quantization of $S^2$. Let
\[
I := \{0\} \cup \bigcup_{N=0}^\infty [\tfrac2N-e^{-N},\tfrac2N+e^{-N}]
\]
and for $\hbar\in  [\tfrac2N-e^{-N},\tfrac2N+e^{-N}]$, $A_\hbar := \Li(\Hi_N)$, and $\Qh := T_N$.
This defines an infinite order, algebraically closed, strict deformation quantization of $S^2$. The only peculiar feature of this quantization is that $\Qh$ is locally constant away from $\hbar=0$, but an additional axiom to rule that out would probably rule out the trivial quantization of a manifold with $0$ Poisson structure. 
\medskip

The Bott projection is not the only $2\times2$ projection over $S^2$. In fact, a $2\times2$ projection on $S^2$ is equivalent to a map from $S^2$ to $\co P^1 = S^2$. The degree of the map gives the Chern number of the projection. 

Suppose that we started with a $2\times 2$ projection with Chern number $s$. Then the normalized trace would be $\nTr[e(\hbar)] = s/(N+1)$. This is not always the reciprocal of an integer. So, how does Lemma~\ref{Integer} ``know'' that we started with the Bott projection?

The answer is that $\eb$ and $1-\eb$ are the only nontrivial \emph{equivariant} $2\times2$ projections on $S^2$. Any other such projection is rather far from being equivariant. Consequently, if we quantize it, then $\tr e(\hbar)$ is never close to a multiple of the identity; its spectrum is always too wide for Lemma~\ref{Integer} to be nontrivial.
\medskip

It would be nice to show that the quantization of $S^2$ is essentially unique, i.e., that the algebras $A_{\hbar\neq 0}$ are all simple matrix algebras, and the Berezin-Toeplitz maps define continuous sections of the continuous field. However, the present axioms are not strong enough to imply this.

Given Lemma~\ref{Squash}, the proof of Lemma~\ref{Integer} can be extended slightly to show that $A_\hbar$ contains a subalgebra isomorphic to $\Mat_k(\co)$, where $k$ is the closest integer to $\Abs{\thn(\hbar)}$, and that these subalgebras form a continuous subfield of $A$ with structure equivalent to that given by Berezin-Toeplitz quantization. This is what Rieffel essentially proved in the equivariant case.

However, this does not show that $A_\hbar\cong \Mat_k(\co)$. I believe that would require some additional irreducibility axiom. The idea would be to require than any ``subquantization'' of $(I,A,Q)$ is essentially the whole thing. However, I do not know of a reasonable formulation of such an axiom. 

For example, the trivial quantization of $\R$ should not be ruled out by the axioms. In this case, $I$ is an interval, $\A_0=A_\hbar = \C_0(\R)$ and $\Qh$ is the identity map. This has proper subquantizations corresponding to bundles of open subintervals in $\R\times I$, although these are isomorphic to $(I,A,Q)$.
\medskip

It is plausible that the missing axiom is simply to assume that the quantization is algebraically closed (i.e., $\Im \Qh$ is a ${}^*$-subalgebra). Indeed, it may be that algebraic closedness would imply that $I$ is disconnected for $S^2$, without requiring the quantization to be second order. I only know that the technique I used in Lemma~\ref{Squash} requires a second order quantization. 

It would be interesting to find an alternative proof using algebraic closedness, or alternatively to find an example of a first order, algebraically closed, strict deformation quantization of $S^2$ with $I$ connected.
\medskip

The next major step should be to generalize Theorem~\ref{Infinite} to general symplectic manifolds. The key to this is perhaps some generalization of Lemma~\ref{Integer} to larger matrices.

\subsection*{Acknowledgements} 
I would like to thank Klaas Landsman and Marc Rieffel for their comments, and Ryszard Nest for encouraging me to investigate this question.


\begin{thebibliography}{10}
\bibitem{b-m-s}
Bordemann, Martin; Meinrenken, Eckhard; Schlichenmaier, Martin:
Toeplitz quantization of \Kahler\ manifolds and ${\rm gl}(N)$, $N\to\infty$ limits. 
Comm. Math. Phys. 165 (1994), no.~2, 281--296. 
MR1301849.

\bibitem{fed}
Fedosov, Boris:
Deformation quantization and index theory. 
Mathematical Topics, 9. Akademie Verlag, Berlin, 1996. 325 pp. 
MR1376365

\bibitem{haw3}
Hawkins, Eli: 
Geometric quantization of vector bundles and the correspondence with
 deformation quantization.
arXiv:math/9808116 [math.QA], arXiv:math/9811049 [math.QA]
 Comm. Math. Phys.  215  (2000),  no.~2, 409--432.
MR1799853

\bibitem{haw8}
Hawkins, Eli:  
Quantization of multiply connected manifolds.
 Comm. Math. Phys.  255  (2005),  no.~3, 513--575.
 arXiv:math/0304246 [math.QA]
MR2134724

\bibitem{k-l}
Klimek, Slawomir;  Lesniewski, Andrzej:
Quantum Riemann surfaces for arbitrary Planck's constant.
 J. Math. Phys.  37  (1996),  no.~5, 2157--2165.
MR1385822

\bibitem{kon}
Kontsevich, Maxim:
Deformation quantization of Poisson manifolds.
 Lett. Math. Phys.  66  (2003),  no.~3, 157--216.
 arXiv:q-alg/9709040
MR2062626.

\bibitem{lan}
Landsman, N. P.:  
Mathematical topics between classical and quantum mechanics.
Springer Monographs in Mathematics. Springer-Verlag, New York,  1998. xx+529 pp. 
MR1662141.

\bibitem{l-r}
Landsman, N. P. ;  Ramazan, B.:  
Quantization of Poisson algebras associated to Lie algebroids.
 Groupoids in analysis, geometry, and physics (Boulder, CO, 1999), 
 159--192, Contemp. Math., 282, Amer. Math. Soc., Providence, RI,  2001.
 arXiv:math-ph/0001005
MR1855249.

\bibitem{li}
Li, Hanfeng:
Strict quantizations of almost Poisson manifolds.
Comm. Math. Phys.  257  (2005),  no.~2, 257--272.
arXiv:math/0303078 [math.QA]
MR2164597.

\bibitem{n-n-p}
Natsume, Toshikazu; Nest, Ryszard; Peter, Ingo:
Strict quantizations of symplectic manifolds. 
Lett. Math. Phys. 66 (2003), no.~1-2, 73--89. 
MR2064593.

\bibitem{n-t3}
Nest, Ryszard;  Tsygan, Boris: 
Formal versus analytic index theorems.
 Internat. Math. Res. Notices  1996,  no.~11, 557--564.
MR1405974.

\bibitem{rie1}
Rieffel, Marc A.: 
 \cs-algebras associated with irrational rotations.
 Pacific J. Math.  93  (1981),  no.~2, 415--429.
 MR0623572

\bibitem{rie4}
Rieffel, Marc A.:
Deformation quantization of Heisenberg manifolds.
 Comm. Math. Phys.  122  (1989),  no.~4, 531--562. MR1002830.
 
\bibitem{rie7}
Rieffel, Marc A.: 
Lie group convolution algebras as deformation quantizations of linear
 Poisson structures.
 Amer. J. Math.  112  (1990),  no.~4, 657--685. 
MR1064995.

\bibitem{rie11}
Rieffel, Marc A.:
Deformation quantization for actions of $\R^d$.
 Mem. Amer. Math. Soc.  106  (1993),  no.~506, x+93 pp.
MR1184061.

\bibitem{she}
Sheu, Albert Jeu-Liang:
Quantization of the Poisson ${\rm SU}(2)$ and its Poisson homogeneous
 space---the $2$-sphere.
With an appendix by Jiang-Hua Lu and Alan Weinstein.
 Comm. Math. Phys.  135  (1991),  no.~2, 217--232.
MR1087382

\bibitem{was}
Wassermann, Antony: 
Ergodic actions of compact groups on operator algebras. III.
 Classification for ${\rm SU}(2)$.
 Invent. Math.  93  (1988),  no.~2, 309--354.
MR0948104

\bibitem{wei}
Weinstein, Alan: 
Deformation quantization.
S\'eminaire Bourbaki, Vol. 1993/94.
 Ast\'erisque  No. 227  (1995), Exp. No. 789, 5, 389--409.
MR1321655

\bibitem{xu}
Xu, Ping:
Noncommutative Poisson algebras. 
Amer. J. Math. 116 (1994), no.~1, 101--125. 
MR1262428.
\end{thebibliography}
\end{document}